\documentclass[10pt,a4paper]{article}

%authors: Christo Morison and Weini Huang

\usepackage[UKenglish]{babel}
\usepackage{float}

\usepackage{csquotes, enumerate, amsfonts, amssymb, physics, amsmath, xfrac, graphicx, subcaption, placeins, color, booktabs, mathtools, tikz, blkarray, nicematrix, upgreek, tikzscale, bbm, comment, amsthm, microtype, pgfplots, tabularx}

\pgfplotsset{width = \textwidth, compat = 1.9}

\newcommand{\E}[1]{\mathbb{E}\left[#1\right]}
\renewcommand{\P}[1]{\mathbb{P}\left[#1\right]}
\def\dd{\mathrm{d}}
\DeclareMathOperator{\cov}{cov}
\DeclareMathOperator{\vari}{var}
\newcommand{\stir}[2]{\genfrac{[}{]}{0pt}{}{#1}{#2}}

\newtheorem{prop}{Proposition}[section]
\renewcommand{\theprop}{\arabic{prop}}
\newtheorem{coro}{Corollary}[section]
\renewcommand{\thecoro}{\arabic{coro}}
\newtheorem{lem}{Lemma}[section]
\renewcommand{\thelem}{\arabic{lem}}
\theoremstyle{definition}
\newtheorem*{rem}{Remark}

\usepackage[labelfont=bf,labelsep=period]{caption}
\usepackage[sc]{mathpazo}

\usepackage[backend=biber, citestyle=nature]{biblatex}
\addbibresource{MSH.bib}

\usepackage[draft]{todonotes}

\usepackage{authblk}
\usepackage[
 breaklinks=true
 ,colorlinks=true
 ,linkcolor=blue
 ,citecolor=blue
 ,backref=none
 ,pagebackref=false
 ]{hyperref} 

\topmargin 0.0cm
\oddsidemargin 0.45cm
\evensidemargin 0.45cm
\textwidth 16cm 
\textheight 21.2cm
\setlength{\parindent}{0em}
\setlength{\parskip}{1em}

\title{Single-cell mutational burden distributions in birth-death processes}

\author{Christo Morison$^{1}$, Dudley Stark$^{1}$ \& Weini Huang$^{1,2,*}$ }

\affil{
{\footnotesize
$^{1}$School of Mathematical Sciences, Queen Mary University of London, London, E1 4NS United Kingdom.
$^{2}$Group of Theoretical Biology, Research Section of Genomics , School of Life Sciences, Sun Yat-sen University, Guangzhou, 510060 China.\\
$^*$Corresponding author: \texttt{weini.huang@qmul.ac.uk}}
}

\begin{document}
\maketitle

\begin{abstract}

\noindent Genetic mutations are footprints of tumour growth. While mutation data in bulk samples has been used to infer evolutionary parameters hard to measure \textit{in vivo}, the advent of single-cell data has led to strong interest in the mutational burden distribution (MBD) among tumour cells. We introduce dynamical matrices and recurrence relations to integrate this single-cell MBD with known statistics, and derive new analytical expressions. Surprisingly, we find that the shape of the MBD is driven by cell lineage-level stochasticity rather than by the distribution of mutations in each cell division.
\end{abstract}

% long abstract: \noindent Genetic mutations are footprints of cancer evolution and reveal critical dynamic parameters of tumour growth, which otherwise are hard to measure {\it in vivo}. The mutation accumulation in tumour cell populations has been described by various statistics, such as site frequency spectra (SFS), single-cell division distributions (DD) and mutational burden distributions (MBD). While DD and SFS have been intensively studied in phylogenetics especially after the development of whole genome sequencing technology of bulk samples, MBD has drawn attention more recently with the single-cell sequencing data. Although those statistics all arise from the same somatic evolutionary process, an integrated understanding of these distributions is missing and requires novel mathematical tools to better inform the ecological and evolutionary dynamics of tumours.  Here we introduce dynamical matrices to analyse and unite the SFS, DD and MBD and derive recurrence relations for the expectations of these three distributions. While we successfully recover classic exact results in pure-birth cases for the SFS and the DD through our new framework, we derive a new expression for the MBD and approximate all three distributions when death is introduced. We demonstrate a natural link between the SFS and the single-cell MBD, and show that the MBD can be regenerated through the DD. Counter-intuitively, the single-cell MBD is mainly driven by the stochasticity arising in the DD, rather than the extra stochasticity in the number of mutations at each cell division.

\textbf{Keywords:} Birth-death process,  mutational burden distribution, single cell statistics, mutation accumulation, site frequency spectrum, infinite sites approximation, cancer evolution.\vspace{-0.4cm}

%\subsection*{Significance Statement}
%Somatic mutations accumulated in tissue growth and maintenance lead to genetic variation in tumours and healthy tissues. The patterns of those mutations have been used to reveal tumour history. Here, we developed a general framework to unite different statistical properties of mutation distributions between bulk sequencing data and single-cell data. The site frequency spectra from bulk data, division distributions and single-cell mutational burden distributions from single-cell data can be connected using dynamic matrices and recurrence relations. Counter-intuitively, the stochasticity in the number of mutations acquired in each cell division does not play a critical role in the single-cell mutation burden distribution.

\section*{Introduction}
Somatic mutations are important for the evolution of biological systems with clonal reproduction, including the development from healthy tissues to cancer~\cite{weinberg2013biology, reuschEvolutionSomaticGenetic2021a}. While less is known about the somatic mutation rates in clonal species such as plants and corals, they have been studied extensively in human tissues. Healthy cells may accumulate in the order of $1$ to $2$ mutations per cell per division, which is directly observable in early development~\cite{baeDifferentMutationalRates2018a, lee-sixPopulationDynamicsNormal2018a, wernerMeasuringSingleCell2020a}. The mutational rate of tumour cells is often thought to be higher, which can be caused for example by genomic instability~\cite{frankProblemsSomaticMutation2004a, komarovaCancerAgingOptimal2005a, burrellCausesConsequencesGenetic2013a}. This large number of mutations accumulated in tumours serves as a genetic footprint to reveal their evolutionary history. Since the majority of these mutations are neutral~\cite{bozicAccumulationDriverPassenger2010a}, not impacting the fitness of a cell compared to its parental cell, neutral theory has been used to explain mutational patterns in many patient samples across different tumour types~\cite{lingExtremelyHighGenetic2015a, sottorivaBigBangModel2015a, williamsIdentificationNeutralTumor2016a}. These measurements often demonstrate an early expansion of tumour populations, wherein driver mutations are clonal and the intratumour heterogeneity arises from neutral passenger mutations accumulated after cancer initiation. Although clonal interference, where cells carrying different sets of driver mutations intercompete, is a likely alternative scenario especially in large populations~\cite{parkClonalInterferenceLarge2007a, karlssonDeterministicEvolutionStringent2023a}, here we focus on a further understanding of mutation accumulation under neutral selection as an important baseline dynamics.

Distributions of genetic heterogeneity under neutral selection have been studied in population genetics for over half a century~\cite{ewensPseudotransientDistributionIts1964a, kimuraGeneticVariabilityMaintained1968a}. One of such statistics is the site frequency spectrum (SFS), which describes the frequencies of mutations in a population~\cite{fuStatisticalPropertiesSegregating1995a}. Because the SFS deals with population-level information, it can be compared to bulk genomic data or pooled singe-cell data ~\cite{williamsIdentificationNeutralTumor2016a, bozicQuantifyingClonalSubclonal2016a,moellerMeasuresGeneticDiversification2022}. For an exponentially-growing population, a rescaling of the SFS, the variant allele frequency spectrum, has been shown to follow a $1 / f^2$ power law, for $f$ the frequency of a mutation in the population~\cite{durrettPopulationGeneticsNeutral2013, durrettBranchingProcessModels2015a, bozicQuantifyingClonalSubclonal2016a, williamsIdentificationNeutralTumor2016a}. More recently, exact expressions for the SFS were found under the assumption of neutral evolution~\cite{gunnarssonExactSiteFrequency2021a}.

The advent of single-cell sequencing~\cite{shapiroSinglecellSequencingbasedTechnologies2013a, wangAdvancesApplicationsSingleCell2015a} opens the door for combining bulk and single-cell data to understand the growth history and dynamic traits of (healthy or tumorous) tissues, which are otherwise difficult to measure directly  \cite{abascalSomaticMutationLandscapes2021a, moellerMeasuresGeneticDiversification2022}. There is great need for new mathematical and computational machinery to cope with single-cell data, which provides different mutational distributions beyond the SFS. The number of unique mutations in the population, also known as the overall tumour mutational burden (TMB)~\cite{chalmersAnalysis1000002017a}, has been studied both in a single tumour~\cite{gunnarssonExactSiteFrequency2021a} and (its distribution) between tumours~\cite{fernandezCancerSpecificThresholdsAdjust2019a, martinez-perezPanelsModelsAccurate2021a}. However, the distribution of mutational burdens between cells, the so-called single-cell mutational burden distribution (MBD), has only recently been experimentally observable through single-cell sequencing. Understanding the MBD may further help in inferring important evolutionary parameters, determining the growth history of the tumour and the level of selection at play, with neutral selection as a baseline with which to compare. Using data from healthy haematopoietic stem cells and oesophageal epithelial cells, Moeller \& Mon P\`ere \textit{et al.}~showed that analysis of single-cell and bulk data complement each other and narrowed down the parameter inference of the mutation rate and stem cell population size~\cite{moellerMeasuresGeneticDiversification2022}. More specifically, the mean and variance of the MBD for a growing population were derived and used to estimate the underlying mutation rate~\cite{moellerMeasuresGeneticDiversification2022}. However, the exact analytical shape of the MBD has not yet been explicitly found.

The MBD evolves during the cell division process, and thus an instructive object to study it with is the cell lineage tree~\cite{derenyiHowMutationAccumulation2024}, whose leaves symbolise living cells and whose root is the progenitor of the population. Branching processes can then be viewed as growing trees, where cell division is represented by a leaf bifurcating into two leaves, and cell death is the removal of a leaf. Because this framework can generate phylogenetic trees~\cite{pageMolecularEvolutionPhylogenetic2009}, cell lineage trees have properties that have been extensively studied~\cite{steelPropertiesPhylogeneticTrees2001a}. One such property is the distribution of leaf heights (or the distances in edges from root to leaves), known as the division distribution (DD) of individual cells. By including the accumulation of new mutations at internal nodes of the tree, the MBD is obtained~\cite{derenyiHowMutationAccumulation2024}. An expression for the DD generated by a pure-birth, or Yule, process has been found~\cite{lynchMoreCombinatorialProperties1965a}; though when death is included it has not yet been solved exactly. Our goal is to build upon knowledge of the DD and the SFS to better understand the MBD, by formulating an discrete-time approach that integrates all three distributions.

We introduce a new framework via dynamical matrices to investigate mutation accumulation in a birth-death process and explain how key quantities such as the SFS, DD and MBD are obtained from these mutational matrices. This framework allows us to derive exact solutions of these distributions by recurrence relations in the pure-birth case, as well as first-order approximations when death is introduced, which hold in the low-death and large-population limits. By comparing our solutions for the SFS and DD to known results in population genetics, we first demonstrate the efficacy of our framework.
We then show new results in expressions for the birth-death DD (equation~\eqref{birthdeathddapprox}) and both the pure-birth and birth-death MBD (equation~\eqref{poissonificationexpectationseq3}). Our analytical results for all three distributions agree well with stochastic simulations. We find that the MBD can be generated via the DD and the mean mutation rate per cell division, independently of the stochasticity in the number of mutations per cell division.

\section*{Model}

\subsection*{Mutation accumulation in a birth-death process}

In a birth-death process where either a cell divides with probability $\beta$ or dies with probability $\delta = 1 - \beta$, the population size $N_i$ at time step $i$ can be described by a discrete-time Markov chain (Figure~\ref{chain-tree-fig}a). The state space in this Markov chain is the finite integer set $\{0, \dots, N\}$ , where $N$ is the largest possible population size. In some cases, we are interested in the limit $N \to \infty$.

Often, the stochastic birth-death process explicitly involves a continuous time parameter $t$ instead of a discrete step count $i$. This allows for rates to be considered instead of probabilities; however, as long as events are assumed not to be simultaneous, these two schemes can be mapped to one another by choosing a distribution of times between events. Most often, events are assumed to be exponentially distributed, and thus their frequency grows with the population size.

\begin{figure}[ht]
    \centering
    \includegraphics[width = 0.9\textwidth]{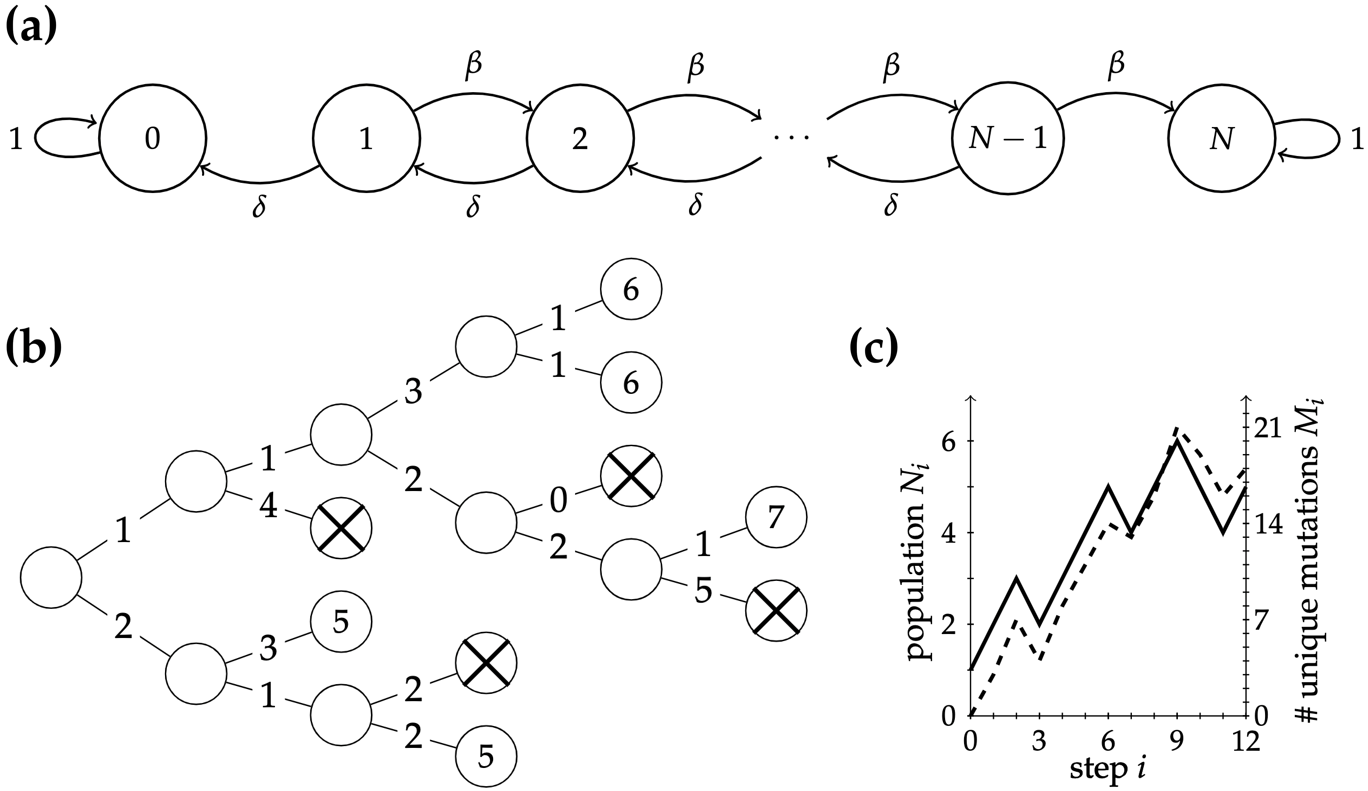}
    \caption{\textbf{(a)}~Discrete-time Markov chain description of the population size. \textbf{(b)}~Growing binary tree representation of an example realisation of the birth-death process with mutations described in the main text, with birth probability $\beta = 2 / 3$, death probability $\delta = 1 / 3$, mutational mean $\mu = 2$, initial population $N_0 = 1$ and final population $N_{12} = 5$. Cells that are crossed out have died. Edges are labelled by the number of new mutations occurring during that division. Leaves (living cells) are labelled by their mutational burden, which is equal to the sum of the edges that connect them to the root, or the mutation-free progenitor cell. \textbf{(c)}~The population size $N_i$ (solid line) and number of unique mutations $M_i$ (dashed line) versus the step count $i$ for the example realisation in~(b).}
    \label{chain-tree-fig}
\end{figure}

Here, we focus on the growing-population case $\beta > \delta$ and assume $N_0 = 1$ unless otherwise mentioned. In this case, the birth-death process is a growing rooted binary tree, where the root is the lone progenitor cell (assumed to be mutation-free, as any of its mutations will be clonal in the population), leaves are living cells, and pruned leaves are dead cells. See Figure~\ref{chain-tree-fig}b for an example realisation of such a process. Novel mutations are accumulated during cell divisions and old mutations may be lost in cell death. When a cell divides, its daughter cells inherit all mutations carried by the mother cell and acquire a random number of new mutations drawn from a Poisson distribution with mean $\mu$: $U_1, U_2 \sim \text{Pois}(\mu)$, where the indices refer to the two daughter cells. These mutations are unique under the infinite sites approximation, where the probability of two point mutations occurring at the same location along the large genome is vanishingly small~\cite{kimuraNumberHeterozygousNucleotide1969a}. The point mutations occurring during each duplication of the genome are thought of independent from each other and thus their number follows a Poisson distribution~\cite{lederbergReplicaPlatingIndirect1989a, zhengMathematicalIssuesArising2003a}.

We are most interested in three key quantities of this birth-death process with mutations at each step $i$: \textit{(i)}~the site frequency spectrum (SFS) $\{S_{j, i}\}_j$, whose elements $S_{j, i}$ denote the number of mutations which occur $j$ times in the population~\cite{fuStatisticalPropertiesSegregating1995a}; \textit{(ii)}~the single-cell mutational burden distribution (MBD) $\{B_{k, i}\}_k$, whose elements $B_{k, i}$ are the number of cells having a mutational burden of $k$~\cite{moellerMeasuresGeneticDiversification2022}; and \textit{(iii)}~the division distribution (DD) $\{D_{\ell, i}\}_\ell$, whose elements $D_{\ell, i}$ indicate the number of cells having undergone $\ell$ divisions during the process. Note that $D_{\ell, i}$ is also the number of leaves lying at a distance of $\ell$ edges from the root in the growing tree framework.

While the importance of the SFS and the DD have been investigated in growing populations~\cite{steelPropertiesPhylogeneticTrees2001a, durrettPopulationGeneticsNeutral2013, durrettBranchingProcessModels2015a, williamsIdentificationNeutralTumor2016a, bozicQuantifyingClonalSubclonal2016a, gunnarssonExactSiteFrequency2021a}, we are interested in the relationship between them and how it can help us understand the MBD. Here, we introduce a novel discrete-time framework to demonstrate the symmetry between these distributions. The number of unique mutations at step $i$ is $M_i = \sum_jS_{j, i}$, and the population size is $N_i = \sum_kB_{k, i} = \sum_\ell D_{\ell, i}$, both of which are plotted in Figure~\ref{chain-tree-fig}c for the example found in Figure~\ref{chain-tree-fig}b. Thus, the MBD and the DD form partitions of the number of cells in a way similar to the SFS partitioning the number of unique mutations. Next, we introduce dynamical matrices to connect those quantities arising from the same population of individual cells. 

\subsection*{Dynamical matrices to unite the SFS, DD and MBD}

We consider a collection of matrices $Y_i$, where the rows refer to cells and the columns refer to mutations, known as genotype matrices or SNP (single nucleotide polymorphism) matrices in bioinformatics~\cite{ronenLearningNaturalSelection2013a}. Our matrices are dynamical in that their entries are updated at each step by a binary filling in the following manner: the $(n, m)$th entry of the matrix is equal to $1$ if the $n$th cell possesses the $m$th mutation at step $i$ and equal to $0$ otherwise. When a cell dies, its row is removed from the matrix. Figure~\ref{mut-matrix-fig}a shows an example of the matrix $Y_i$ associated to the tree example in Figure~\ref{chain-tree-fig}b. We extend the concept of genotype matrices by marking mutations arising during a single (past) division by grey shaded areas. Note that if no mutations arise during a division, the corresponding $0$ entry would still be shaded in grey, as this shading tracks the division burden. These matrices are also how mutational data can be stored in stochastic simulations. 

\begin{figure}[ht]
    \centering
    \includegraphics[width = \textwidth]{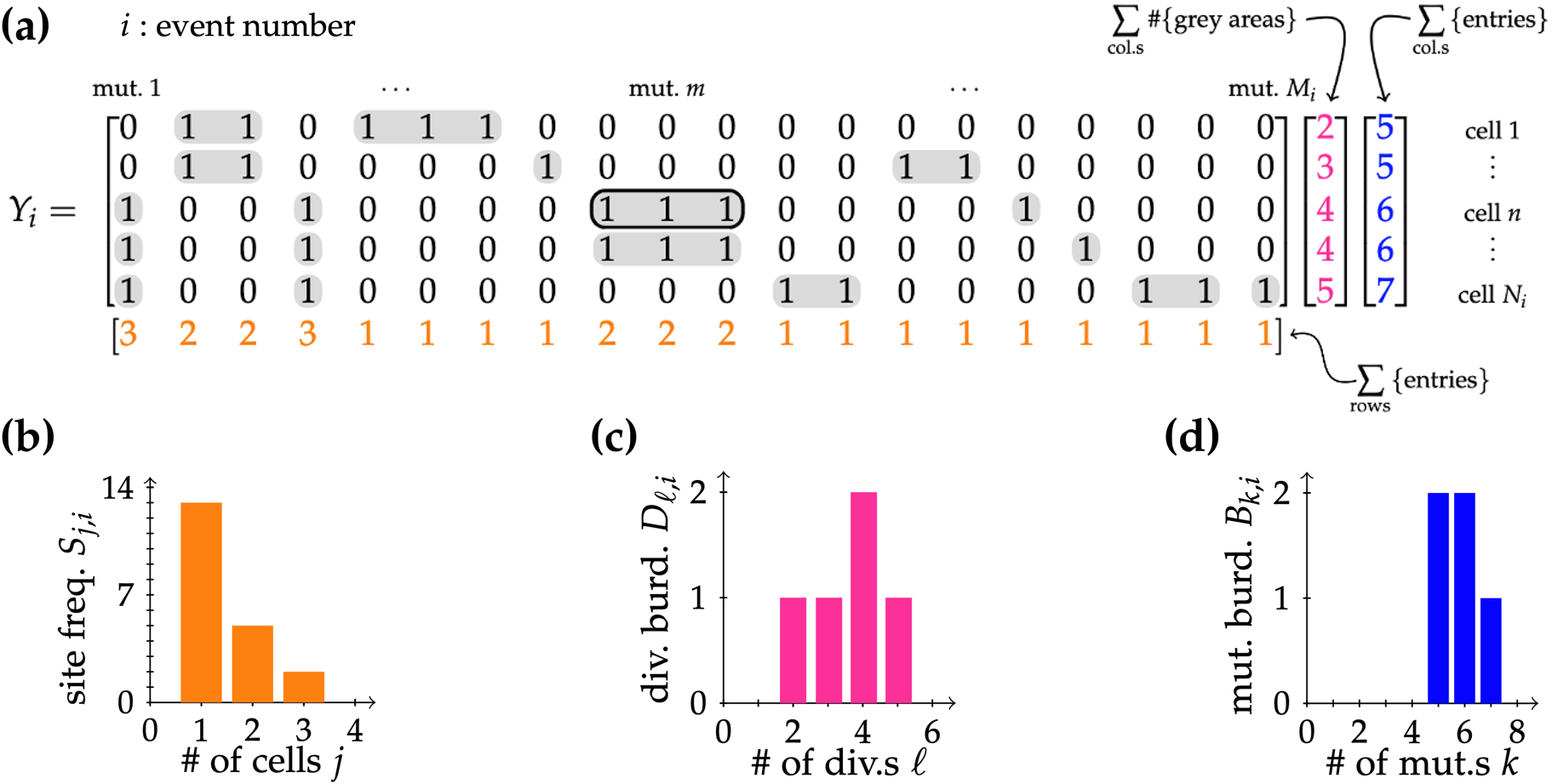}
    \caption{The matrix framework described in the main text, where $i$ refers to the Markov step count (in this example, $i = 12$). \textbf{(a)}~The matrix $Y_i$ corresponding to the example realisation of the birth-death process depicted in Figure~\ref{chain-tree-fig}b, where entry $(n, m)$ is $1$ if cell $n$ possesses mutation $m$ and $0$ otherwise. Grey shaded mutational entries arose during the same division and thus always occur together in the descendants of their progenitor. For example, mutations $m - 1$, $m$ and $m + 1$ in cell $n$ (shown outlined in black) were generated in the same past division. (In Figure~\ref{chain-tree-fig}b, we can determine by inspection this ancestor cell to be the one that later divided into two cells with mutational burdens of $6$.) The row sum of the entries of $Y_i$ is shown in orange, the column sum of the number of grey areas is in pink and the column sum of the entries of $Y_i$ is in blue. \textbf{(b)}~The site frequency spectrum (SFS) $\{S_{j, i}\}_j$: a histogram of the bottommost (orange) vector of~(a). \textbf{(c)}~The division distribution (DD) $\{D_{\ell, i}\}_\ell$: a histogram of the middle (pink) vector of~(a). \textbf{(d)}~The single-cell mutational burden distribution (MBD) $\{B_{k, i}\}_k$: a histogram of the rightmost (blue) vector of~(a), or equivalently of the sums of the weights of the edges in Figure~\ref{chain-tree-fig}b from the root to the leaves.}
    \label{mut-matrix-fig}
\end{figure}

We can obtain the distributions of our key quantities, the SFS, DD and MBD (Figures~\ref{mut-matrix-fig}b--d), from our dynamical mutation matrices (Figure~\ref{mut-matrix-fig}a). For each mutation (column), the number of cells carrying this mutation is the row sum of the entries of $Y_i$ (orange vector in Figure~\ref{mut-matrix-fig}a). Thus, the histogram of this vector is the SFS at step $i$. For each cell (row), the number of divisions that the cell has undergone is the column sum of the number of grey areas (pink vector in Figure~\ref{mut-matrix-fig}a), and the number of mutations in this cell is the column sum of the entries of $Y_i$ (blue vector in Figure~\ref{mut-matrix-fig}a). Correspondingly, the histograms of these two vectors lead to the other distributions obtainable from single-cell information: the DD and the MBD.

The symmetry provided by this mutation matrix $Y_i$ gives rise to the following relationship between the site frequency spectrum and the single-cell mutational burden distribution:
\begin{equation}\label{sfsmbdmirror}
    \sum_{j = 1}^{N_i}jS_{j, i} = \sum_{k = 0}^{M_i}kB_{k, i}.
\end{equation}
We call this quantity the number of mutational occurrences: that is, the sum of the entries of $Y_i$.

\begin{table}[ht]
    \centering
    \begin{tabularx}{0.85\textwidth}{cX}
    \toprule
        \textbf{Symbol} & \textbf{Description}\\\hline
        $\beta$ & Birth probability (Markov transition rate)\\
        $\delta$ & Death probability (Markov transition rate)\\
        $N_i$ & Population size at step $i$\\
        $N$ & Maximal population size\\
        $\mu$ & Mutational mean: mean number of mutations acquired per division per daughter cell\\
        $M_i$ & Number of unique mutations (tumour mutational burden, TMB) at step $i$\\
        $\{S_{j,i}\}_j$ & Site frequency spectrum (SFS), where $S_{j,i}$ is the number of mutations occurring $j$ times in the population at step $i$\\
        $\{B_{k, i}\}_k$ & Single-cell mutational burden distribution (MBD), where $B_{k, i}$ is the number of cells with $k$ mutations at step $i$\\
        $\{D_{\ell, i}\}_\ell$ & Division distribution (DD), where $D_{\ell, i}$ is the number of cells having undergone $\ell$ divisions at step $i$\\
        $Y_i$ & Mutational matrix at step $i$: entry $(n, m)$ is $1$ if cell $n$ possesses mutation $m$, $0$ otherwise\\
        $\stir{i}{\ell}$ & Unsigned Stirling number of the first kind with indices $1 \leq \ell \leq i$\\
        \bottomrule
    \end{tabularx}
    \caption{Notation used in this manuscript.}
    \label{notationtab}
\end{table}

\section*{Results}

Our primary approach for deriving the distributions of our key quantities from the discrete-time model is as follows. We use the law of total expectation ($\E{X} = \E{\E{X\,\middle|\,Y}}$, for any random variables $X$ and $Y$) to equate an expected quantity at step $i + 1$ to a conditional expectation. This usually is a function of the quantity at step $i$ conditional on knowledge at step $i$, as earlier knowledge is never needed due to the Markov nature of the model. From this, we derive a recurrence relation for the expected values of our desired quantity, which can be solved.

We first note that conditioning on the survival of the entire population plays a role in all of our subsequent expected values. In the pure-birth case, the population is deterministic and equal to $N_i = i + 1$. Once death is included, however, the population size becomes a random variable. For the birth-death chain of Figure~\ref{chain-tree-fig}a, its expected value at step $i$, both conditioned on survival and not, can be exactly calculated, which is done in Proposition~\ref{expectedniprop} of the \hyperlink{si}{Supplementary Information}. Figure~\ref{n-fig} shows that the expected population size conditioned on survival can be linearly approximated by $\E{N_i} \simeq (\beta - \delta)i + 1$, valid for low death, since in this limit the expected gain in population in one step is $\beta - \delta$. All of our ensuing expectations are conditioned on survival and the initial condition $N_0 = 1$, which we will omit from our notation for brevity.

\subsection*{Site frequency spectrum}

With the recurrence relation method outlined above, we can formally derive the pure-birth ($\beta = 1$) site frequency spectrum. When there is no death, $N_i = i + 1$, and thus the expected number of $j$-abundant mutations in the dividing cell at step $i$ is $j S_{j, i} / (i + 1)$. After the division in step $i$, any $j$-abundant mutations in the dividing cell will become $(j + 1)$-abundant and thus no longer contribute to the $j$-site. Similarly, $(j - 1)$-abundant mutations in the dividing cell now contribute to the $j$-site.  We therefore have
\begin{equation}\label{purebirthsfsrr}
    \E{S_{j, i + 1}} = \E{\E{S_{j, i + 1}\,\middle|\,\{S_{j', i}\}_{j'}}} = \E{S_{j, i} - \frac{j S_{j, i}}{i + 1} + \frac{(j - 1) S_{j - 1, i}}{i + 1} + \left(U_1 + U_2\right)\delta_{1, j}},
\end{equation}
for $\delta_{\cdot, \cdot}$ the Kronecker delta symbol, whose source term arises from the new $1$-abundant mutations occurring during division, which are drawn from a Poisson distribution: $U_1, U_2 \sim \text{Pois}(\mu)$. We make the change of variables $Q_j = \E{S_{j, i}} / (i + 1)$, whose lack of dependence on $i$ can be computed by brute force from equation~\eqref{purebirthsfsrr}, absorbing the source term into the boundary condition $Q_1 = \mu$. Using the linearity of expectation, equation~\eqref{purebirthsfsrr} becomes simply $(j + 1)Q_j = (j - 1)Q_{j - 1}$, which telescopes to obtain the known result
\begin{equation}\label{purebirthsfs}
    \E{S_{j, i}} = \frac{2\mu(i + 1)}{j(j + 1)}.
\end{equation}
An identical procedure can be applied in the birth-death case to recover the large-population expected SFS, though a difficulty here is that now $N_i$ becomes a random variable itself. Since to first order $\E{A / B} \simeq \E{A} / \E{B}$ (when $A$ and nonzero $B$ are close to their expected values), our first-order approximation (whose derivation is found in Proposition~\ref{sfsprop} of the \hyperlink{si}{Supplementary Information}), which matches the $N \to \infty$ result from~\cite{gunnarssonExactSiteFrequency2021a}, is 
\begin{equation}\label{birthdeathsfsapprox}
    \E{S_{j, i}} \simeq \sum_{j' = 0}^\infty \frac{2\mu(\delta / \beta)^{j'}}{(j + j')(j + j' + 1)}\E{N_i},
\end{equation}
where all expectations are conditioned on non-extinction of the whole population~\cite{gunnarssonExactSiteFrequency2021a}. In the limit of low death ($\delta \ll \beta$), this approximation is sound, as then the variance in population size is small (see Proposition~\ref{varianceprop} for details).

\subsection*{Division distribution}

The expected division distribution in the pure-birth case can be obtained in a similar manner as the site frequency spectrum. The probability of selecting a cell with $\ell$ divisions in its history is $D_{\ell, i} / (i + 1)$, which will then no longer contribute to $D_{\ell, i + 1}$, dividing into two cells with one more division in their history than before. The law of total expectation then becomes
\begin{equation}\label{purebirthddrr}
    \E{D_{\ell, i + 1}} = \E{\E{D_{\ell, i + 1}\,\middle|\,\{D_{\ell', i}\}_{\ell' \leq i}}} = \E{D_{\ell, i} - \frac{D_{\ell, i}}{i + 1} + \frac{2 D_{\ell - 1, i}}{i + 1}},
\end{equation}
which can be solved to recover the result from~\cite{lynchMoreCombinatorialProperties1965a}:
\begin{equation}\label{purebirthdd}
    \E{D_{\ell, i}} = \stir{i}{\ell}\frac{2^\ell}{i!},
\end{equation}
where the unsigned Stirling numbers of the first kind $\stir{i}{\ell}$ are defined by the relation
\begin{equation}\label{stirlingdef}
    \stir{i + 1}{\ell} = i\stir{i}{\ell} + \stir{i}{\ell - 1} \quad \text{ for } 1 \leq \ell \leq i,
\end{equation}
with boundary conditions $\stir{1}{1} = 1$ and $\stir{i}{\ell} = 0$ if $\ell > i$ or $\ell = 0$. Results from stochastic simulations using a Gillespie algorithm agree well with this prediction (see Figure~\ref{dd-fig}).

\begin{figure}[ht]
    \centering
    \includegraphics[width = 0.65\textwidth]{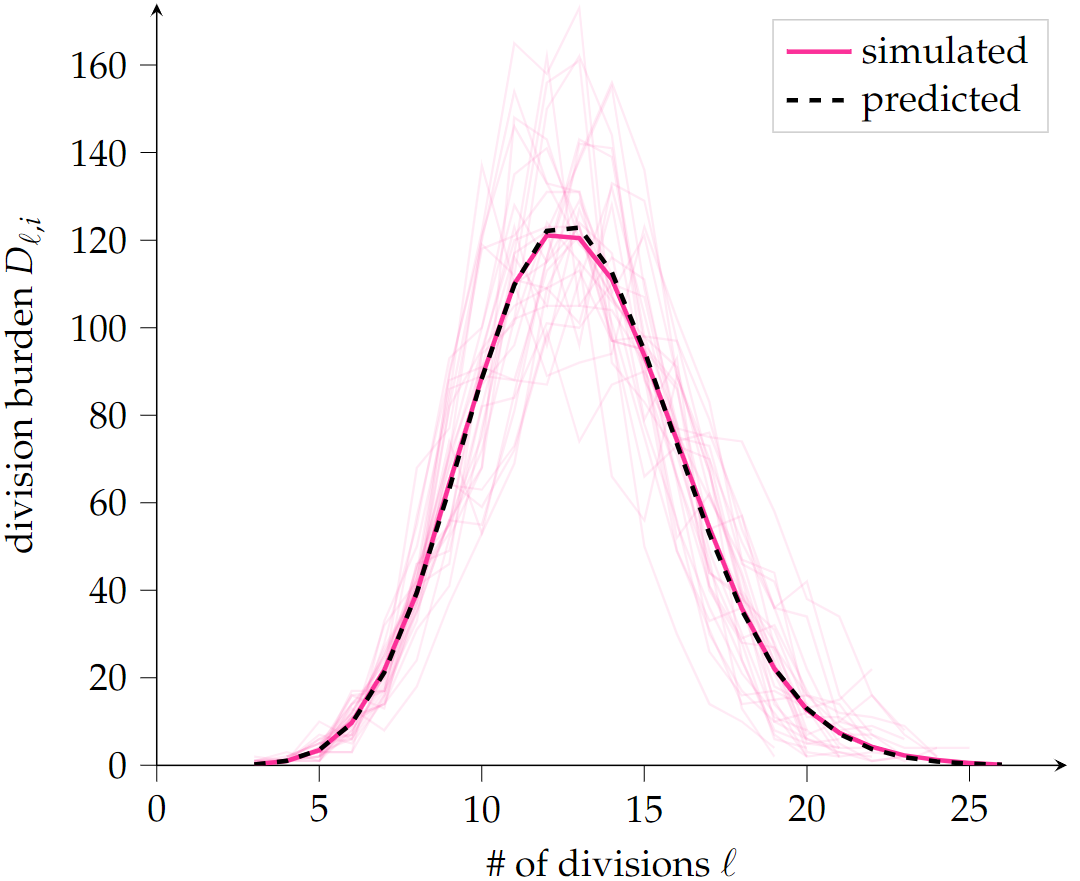}
    \caption{Average (solid dark pink line) of $200$ simulation realisations (representatives in solid pale pink lines) of the division distribution (DD) for a pure-birth process up to final population size $N = 10^3$, along with the predicted expected DD obtained from equation~\eqref{purebirthdd} (dashed black line).}
    \label{dd-fig}
\end{figure}

Similarly to with the SFS, in the $\delta > 0$ case we can obtain a first-order approximation (see Proposition~\ref{ddfirstorderprop}) for the DD:
\begin{equation}\label{birthdeathddapprox}
    \E{D_{\ell, i}} \simeq \frac{\stir{i}{\ell}2^\ell\left(1 - \delta / \beta\right)^{-\ell}}{\sum_{\ell' = 1}^i\stir{i}{\ell'}2^{\ell'}\left(1 - \delta / \beta\right)^{-\ell'}}\E{N_i},
\end{equation}
wherein it is evident that the division distribution partitions the population size (since summing the fraction on the right-hand side over $\ell$ gives unity) and that this partitioning is orchestrated by the functions $f(i, \ell) = \stir{i}{\ell}2^\ell\left(1 - \delta / \beta\right)^{-\ell}$.

\subsection*{Mutational burden distribution}

The single-cell mutational burden distribution differs from the division distribution because there is an additional stochasticity at each cell division due to mutational (Poisson) distributions. To obtain an expected MBD from a DD, we can employ a procedure to introduce this stochasticity as follows: each cell contributing to a division burden $D_{\ell, i}$ will have undergone $\ell$ divisions, so will have acquired $\sum_{p = 1}^\ell U_p$ mutations, where $U_p \sim \text{Pois}(\mu)$ represents the number of mutations acquired during the cell's $p$th division. Since the Poisson distribution is additive, this sum is in turn a Poisson-distributed random variable with mean $\ell\mu$. The left-hand side of Figure~\ref{mbd-fig}a qualitatively depicts the elements of the DD being converted into Poisson probability mass functions associated to these sums of Poisson variables. These are then summed to obtain the MBD, shown on the right-hand side of Figure~\ref{mbd-fig}a, in the following manner.

Writing $U_{p, q, \ell} \sim \text{Pois}(\mu)$ for the number of mutations acquired during the $p$th division of the $q$th cell (for some labelling of cells $1 \leq q \leq D_{\ell, i}$) having undergone $\ell$ divisions, we can sum over the elements of the DD labelled by $\ell$ to obtain
\begin{equation}\label{poissonificationexpectationseq1}
    \E{B_{k, i}} = \E{\E{B_{k, i}\,\middle|\,\{D_{\ell', i}\}_{\ell' \leq i}}} = \E{\E{\sum_{\ell = 0}^i\sum_{q = 1}^{D_{\ell, i}}\mathbbm{1}_{\left\{\sum_{p = 1}^\ell U_{p, q, \ell} = k\right\}}\,\middle|\,\{D_{\ell', i}\}_{\ell' \leq i}}},
\end{equation}
where we have used the indicator function $\mathbbm{1}_A$ to be $1$ on the set $A$ and $0$ elsewhere.

\begin{figure}[H]
    \centering
    \includegraphics[width = \textwidth]{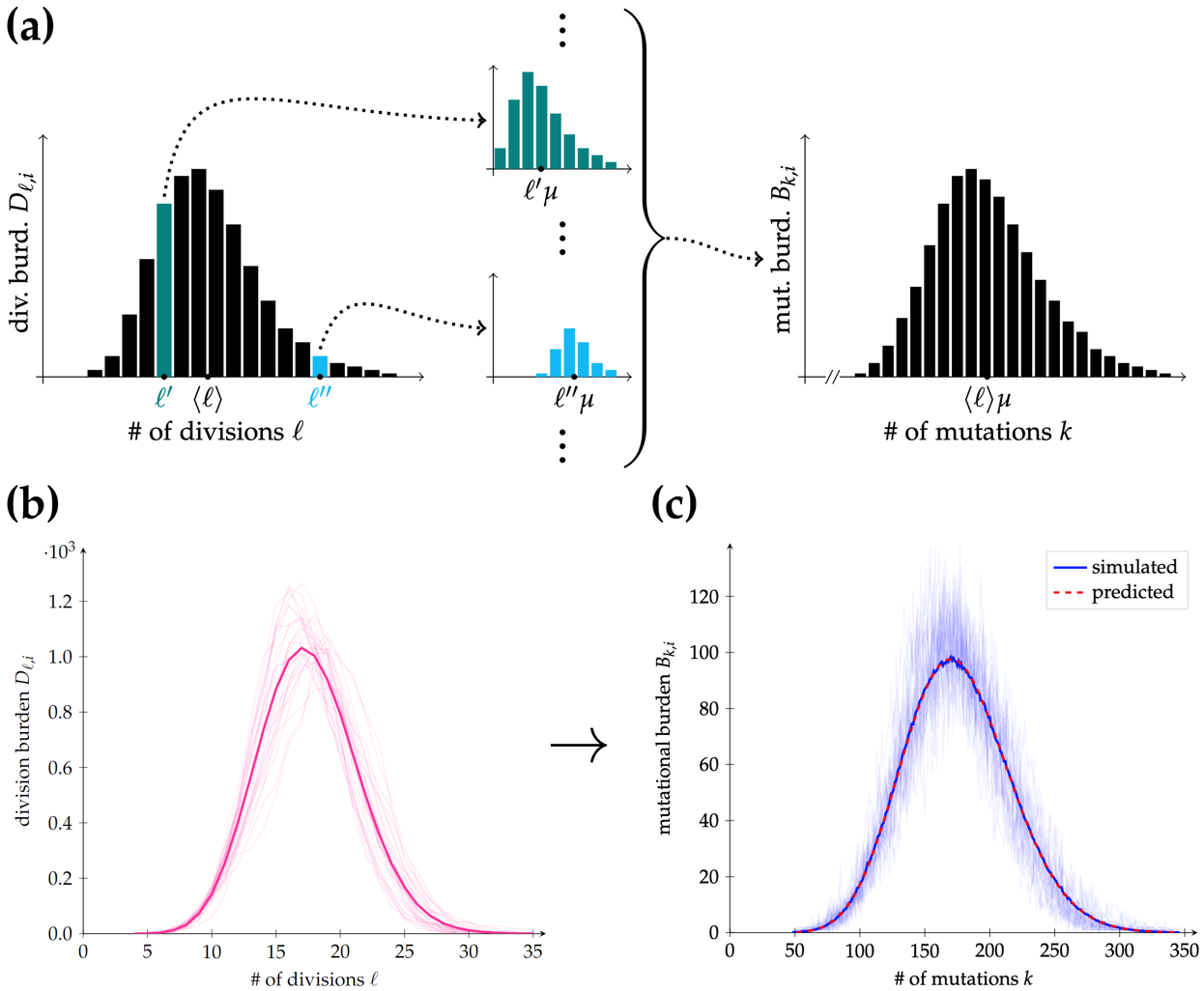}
    \caption{An illustration of the conversion of a division distribution (DD) into a single-cell mutational burden distribution (MBD). \textbf{(a)}~The elements of a DD are translated into Poisson distributions with means $\ell\mu$, weighted by $D_{\ell, i}$ (such that the sums over the teal and cyan distributions are $D_{\ell', i}$  and $D_{\ell'', i}$, respectively, for example), and then summed to obtain the corresponding MBD. Note that if the mean of the DD is $\langle\ell\rangle$, then the mean of the resulting MBD is $\langle\ell\rangle\mu$. \textbf{(b)}~Average (solid dark pink line) of $200$ simulation realisations (representatives in solid pale pink lines) of the DD for a pure-birth process up to final population size $N = 10^4$ with mutational mean $\mu = 10$. \textbf{(c)}~Average (solid dark blue line) of the MBD for the same simulation realisations as~(b) (representatives in solid pale blue lines), along with the MBD obtained from converting the average DD as explained in~(a) and the main text (dashed red line).}
    \label{mbd-fig}
\end{figure}

Now, using the linearity of expectation, that $i$ is fixed and $\ell$ just an index, and the independence of the random variables $D_{\ell, i}$ and $U_{p, q, \ell}$, the right-hand side of equation~\eqref{poissonificationexpectationseq1} becomes
\begin{equation*}\label{poissonificationexpectationseq2}
    \sum_{\ell = 0}^i\E{D_{\ell, i}\E{\mathbbm{1}_{\left\{\sum_{p = 1}^\ell U_{p, q, \ell} = k\right\}}}} = \sum_{\ell = 0}^i\E{D_{\ell, i}}\E{\mathbbm{1}_{\left\{\sum_{p = 1}^\ell U_{p, q, \ell} = k\right\}}}.
\end{equation*}
Finally, substituting the expression in equation~\eqref{purebirthdd} for the expected DD and the probability mass function for the Poisson distribution with mean $\ell\mu$, we find the pure-birth expected MBD:
\begin{equation}\label{poissonificationexpectationseq3}
    \E{B_{k, i}} = \sum_{\ell = 0}^i\stir{i}{\ell}\frac{2^\ell}{i!}\frac{\mathrm{e}^{-\ell\mu}(\ell\mu)^k}{k!}.
\end{equation}
Figures~\ref{mbd-fig}b--c verify the conversion from a DD to a MBD described in the previous discussion with simulations.

The same conversion procedure can be implemented in the birth-death case. Again, working with first-order approximations, the expression in equation~\eqref{birthdeathddapprox} for the expected birth-death DD can be used instead of the pure-birth expression in equation~\eqref{purebirthdd} during the final step to obtain a first-order approximation of the expected birth-death MBD.

Finally, consider the number of mutational occurrences: that is, the sum of the entries of the mutational matrix $Y_i$ or, equivalently, either side of equation~\eqref{sfsmbdmirror}. If this quantity is divided by the number of mutations $M_i$, we obtain the mean of the SFS; and if it is divided by the population $N_i$, we obtain the mean of the MBD. We can derive the expected number of mutational occurrences using our recurrence relation approach, from which we deduce that this mean, representing the expected mutational burden of a cell, 
grows logarithmically with the step count $i$ (see Propositions~\ref{umbeta1prop} and~\ref{umprop}). In the pure-birth case, it is simply a rescaling of the harmonic numbers.

\section*{Discussion}

The distribution of genetic mutations in cell populations has been studied both in the cases of constant~\cite{fuStatisticalPropertiesSegregating1995a, durrettProbabilityModelsDNA2008, ronenLearningNaturalSelection2013a, moellerMeasuresGeneticDiversification2022} and growing populations~\cite{durrettBranchingProcessModels2015a, williamsIdentificationNeutralTumor2016a, simonsDeepSequencingProbe2016a, loebExtensiveSubclonalMutational2019a, watsonEvolutionaryDynamicsFitness2020a, poonSynonymousMutationsReveal2021a, tungSignaturesNeutralEvolution2021a, kurpasModesSelectionTumors2022a}. With the development of single-cell sequencing technologies, exploration of more precise information in single cells is sure to follow in the footsteps of population-level research~\cite{shapiroSinglecellSequencingbasedTechnologies2013a, wangAdvancesApplicationsSingleCell2015a, choComparisonCellState2022}. At the population level, both site frequency spectra (SFS) and overall tumour mutational burden (TMB) have been investigated analytically~\cite{durrettPopulationGeneticsNeutral2013, durrettBranchingProcessModels2015a, bozicQuantifyingClonalSubclonal2016a, williamsIdentificationNeutralTumor2016a, gunnarssonExactSiteFrequency2021a, moellerMeasuresGeneticDiversification2022}. Here we focus on the single-cell distribution of the latter (the single-cell mutational burden distribution, or MBD), and use the foundation of the SFS to better understand the MBD analytically. 

A new framework uniting the SFS and the MBD is presented, relying on a simple procedure: dynamical matrices store the mutational information of a population of cells, whose size is dictated by a birth-death process. Our approach of encoding the data in binary matrices, where the entry $(n, m)$ is $1$ when cell $n$ has mutation $m$ and $0$ otherwise, naturally emerges from the (neutral) evolution-motivated idea wherein a cell is identified by its mutation load~\cite{ronenLearningNaturalSelection2013a}. Two different ways of partitioning the entries of this mutational matrix provide definitions of both the SFS and the MBD as histograms of the row- and column-sums, respectively, as shown in Figure~\ref{mut-matrix-fig}. With this symmetry in mind, which gives rise to equation~\eqref{sfsmbdmirror}, an identical analytical approach depending on the discrete-time Markov nature of the model can be applied to both cases, along with an intermediary case of the division distribution (DD), to obtain recurrence relations for the distributions of interest: we employ the law of total expectation to write the expected value of a quantity of interest in terms of expected values at the previous time step. These recurrences are solved exactly in the pure-birth case and approximately in the birth-death case, giving rise to analytical predictions for the SFS, DD and MBD, which are compared to stochastic simulations as well as previous work on the SFS and the DD.

Indeed, our model recovers the expected values of the SFS and TMB (via the sum of the SFS) derived by Gunnarsson \textit{et al.}~\cite{gunnarssonExactSiteFrequency2021a}, as found in Propositions~\ref{umbeta1prop} and~\ref{umprop} of the \hyperlink{si}{Supplementary Information}. Our stochastic time first-order approximation in equation~\eqref{birthdeathsfsapprox} matches theirs from the stochastic population scenario with a fixed elapsed time in the large-population limit, where the regimes coincide according to their convergence analysis~\cite{gunnarssonExactSiteFrequency2021a}. Our derivation for the pure-birth DD in equation~\eqref{purebirthdd} recovers a result from previous work on phylogenetic trees produced by Yule processes: combinatorics results relating to binary search trees~\cite{lynchMoreCombinatorialProperties1965a} were then applied to the phylogenetic context~\cite{steelPropertiesPhylogeneticTrees2001a}. 

Intuitively, we would think that the explicit single-cell MBD results from both the DD and the extra stochasticity arising from the mutational distribution at each past cell division (the internal nodes of the cell lineage tree). Surprisingly, we found that the the latter nodal stochasticity does not play a large role in the MBD. While there is certainly higher variance in the MBD than in the DD, as evidenced by Figures~\ref{mbd-fig}b--c, the shapes of the two distributions remain similar and we can construct the MBD based on the DD and $\mu$, the mean value of number of mutations acquired per cell in each past cell division. The derivation from equation~\eqref{poissonificationexpectationseq1} to~\eqref{poissonificationexpectationseq3} demonstrates that only the mean of the mutational distribution matters when obtaining the MBD, rather than its higher moments. We further tested this conclusion by applying other mutational distributions than the Poisson in stochastic simulations, which lead to the same predicted MBD, as shown in Figure~\ref{unif-geom-fig}. Employing the binning procedure described in equation~\eqref{histogrammingexpectations} of the \hyperlink{si}{Supplementary Information} allows us to retrace our steps from the MBD to the DD, which reinforces that it is only the mean of the mutational distribution that is of critical importance to the shape of the MBD, not the exact form of the distribution.
By considering the variances of the two distributions, we note that the variance in the single-cell MBD itself is growing while that of the mutational distribution is fixed. We thus expect that after sufficient events, the former will dominate.

We showed that the expected mutational burden for an arbitrary cell in a population (the mean of the MBD) increases logarithmically with the step count $i$ in our model (see Propositions~\ref{umbeta1prop} and~\ref{umprop}). In Moeller \& Mon P\`ere  \textit{et al.}'s continuous-time framework, this mean is shown to be the product of the expected number of divisions in the cell's past and the mutational mean $\mu$~\cite{moellerMeasuresGeneticDiversification2022}, much as we have argued in Figure~\ref{mbd-fig} for our conversion from the DD to the MBD. Under their intuitive assumption of mutation burdens arising from a compound Poisson distribution, the variance of the MBD is dependent on the means and the variances of the DD and the mutational (Poisson) distribution~\cite{moellerMeasuresGeneticDiversification2022}, whereas our derivations and simulations show that only the mean of the mutational distribution plays a significant role, not its higher moments.

Knowledge of the connection between the DD and the MBD also provides a means of evaluating the divisions in a cell's history. By reversing the argument in Figure~\ref{mbd-fig}, MBD data can provide the distribution of divisional histories in a cell population, without resorting direct measurements (for example, via telomere shortening~\cite{blascoTelomereLengthStem2007}).

While single-cell sequencing is still in its adolescence, grappling with hurdles such as trade-offs between sequencing noise, sample size and cost~\cite{goldmanImpactHeterogeneitySingleCell2019a, limAdvancingCancerResearch2020a}, there is a growing need and theoretical gap for mathematical and computational machinery to handle the vast quantities of data being produced~\cite{shapiroSinglecellSequencingbasedTechnologies2013a, choComparisonCellState2022}. Our model serves as a new framework to integrate single-cell and bulk information, and shows how various distributions of accumulated mutations are linked through the same stochastic process.

\subsection*{Acknowledgements}

 We thank Sabin Lessard, Nathaniel Mon P\`ere and Alexander Stein for fruitful discussions. This research was supported by the European Union's Horizon 2020 research and innovation programme under the Marie Sk\l odowska-Curie EvoGamesPlus grant number 955708.

\subsection*{Competing interests}

The authors declare no competing interests.

\printbibliography

\newpage

\setcounter{equation}{0}
\setcounter{figure}{0}
\setcounter{table}{0}
\renewcommand{\theequation}{S\arabic{equation}}
\renewcommand{\thefigure}{S\arabic{figure}}
\renewcommand{\thetable}{S\arabic{table}}
\renewcommand{\theprop}{S\arabic{prop}}
\renewcommand{\thelem}{S\arabic{lem}}
\renewcommand{\thecoro}{S\arabic{coro}}

\section*{Supplementary Information}\hypertarget{si}{}

We first derive results about the expected population size described by the Markov chain depicted in Figure~\ref{chain-tree-fig}a and its probability of extinction, which allow us to make rigorous our main text statement that all of our expectations are conditioned on survival. We then prove several results from the main text. Next, we make additional statements about the mutational occurrences $C_i$ defined by~\eqref{sfsmbdmirror} and the number of unique mutations $M_i$ using our recurrence relation approach. Finally, we comment on what happens if other (than Poisson) mutational distributions are used, and we define a conversion from a single-cell mutational burden distribution (MBD) to a division distribution (DD). Table~\ref{sinotationtab} summarises the new notation for Supplementary Information. 

\begin{table}[ht]
    \centering
    \begin{tabular}{cl}
    \toprule
        \textbf{Symbol} & \textbf{Description}\\\hline
        $\gamma_z$ & Extinction probability for a birth-death process starting at $N_0 = z$\\
        $u_{z, n}(p, q; a)$ & Random walk absorption probability; see Proposition~\ref{fellertrigprop} for full definition\\
        $T$ & Markov chain transition matrix, with entries $T_{m, n} = \P{N_i = n \, \middle| \, N_{i - 1} = m}$\\
        $\zeta_{k, i}$ & coefficient equal to $1 - (i - k) / (k + 1)$ for $\lfloor(i + 1) / 2\rfloor \leq k \leq i$\\
        $C_i$ & Number of mutational occurrences at step $i$: the sum of the entries of $Y_i$\\\bottomrule
    \end{tabular}
    \caption{Notation used in Supplementary Information, in addition to those in Table~\ref{notationtab}.}
    \label{sinotationtab}
\end{table}

\subsection*{Extinction probabilities and expected population sizes}

Consider the Markov chain on a finite state space depicted by Figure~\ref{chain-tree-fig}a.
\begin{prop}\label{gammaprop}
    The extinction probability for the birth-death process starting at $N_0 = z$ with $\beta \neq \frac{1}{2}$ is
    \begin{equation}\label{gammadef}
        \P{N_i = 0 \text{ for some } i \geq 0 \, \middle| \, N_0 = z} = \gamma_z = \frac{\left(\delta / \beta\right)^N - \left(\delta / \beta\right)^z}{\left(\delta / \beta\right)^N - 1}.
    \end{equation}
\end{prop}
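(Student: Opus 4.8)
The plan is to recognise the population-size chain of Figure~\ref{chain-tree-fig}a as a nearest-neighbour random walk on the finite state space $\{0, 1, \dots, N\}$ that moves up by one (a birth) with probability $\beta$ and down by one (a death) with probability $\delta = 1 - \beta$ at each step, with the endpoints $0$ and $N$ made absorbing. Extinction is the event of absorption at $0$, so the statement is exactly the classical gambler's-ruin computation. Writing $\gamma_z$ for the extinction probability started from $N_0 = z$, I would first argue that absorption at one of the two endpoints occurs almost surely (the chain is finite, the interior states $1, \dots, N - 1$ are transient, and from each of them an absorbing state is reachable), so that $\gamma_z$ is well defined and can be pinned down by first-step analysis.

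Conditioning on the first transition and invoking the Markov property, for each interior state $1 \leq z \leq N - 1$ I obtain the second-order linear recurrence
\begin{equation*}
    \gamma_z = \beta\,\gamma_{z + 1} + \delta\,\gamma_{z - 1},
\end{equation*}
together with the boundary values $\gamma_0 = 1$ (already extinct) and $\gamma_N = 0$ (having hit the absorbing ceiling, the lineage never dies out in this model).

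Next I would solve this homogeneous recurrence in the standard way. Substituting $\gamma_z = x^z$ gives the characteristic polynomial $\beta x^2 - x + \delta$, which, using $\beta + \delta = 1$, factors as $\beta(x - 1)(x - \delta / \beta)$; the hypothesis $\beta \neq \tfrac{1}{2}$ guarantees $\delta / \beta \neq 1$, so the two roots $1$ and $\delta / \beta$ are distinct and the general solution is $\gamma_z = A + B(\delta / \beta)^z$. Imposing $\gamma_0 = A + B = 1$ and $\gamma_N = A + B(\delta / \beta)^N = 0$ and eliminating $A$ and $B$ yields precisely equation~\eqref{gammadef}. As a consistency check, letting $N \to \infty$ in the growing regime $\beta > \delta$ sends $(\delta / \beta)^N \to 0$ and recovers the familiar branching-process extinction probability $(\delta / \beta)^z$.

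The steps are individually routine, so the only points needing care are conceptual rather than computational. The main thing to pin down is the modelling convention at the boundaries --- in particular that the ceiling state $N$ is treated as absorbing, which forces $\gamma_N = 0$ and is exactly what distinguishes this finite-$N$ formula from the infinite-state answer. A secondary point is justifying that the interior states are genuinely transient, so that the first-step recurrence captures all of $\gamma_z$ with no probability mass escaping onto a never-absorbed trajectory; this follows because from any interior start the walk has strictly positive probability of reaching $0$ (or $N$) within $N$ steps, whence absorption is certain.
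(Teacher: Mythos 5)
Your proposal is correct and follows essentially the same route as the paper: first-step (gambler's-ruin) analysis giving the recurrence $\gamma_z = \beta\gamma_{z+1} + \delta\gamma_{z-1}$ with boundary conditions $\gamma_0 = 1$, $\gamma_N = 0$, solved as a homogeneous second-order linear recurrence with the distinct roots $1$ and $\delta/\beta$. Your extra remarks (almost-sure absorption, the $N \to \infty$ check) are sound and correspond to points the paper defers to the remark following the proposition.
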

\begin{proof}
    Conditioning on the previous event (birth or death), we have for $1 \leq z \leq N - 1$:
    \begin{equation}\label{gammarecrel}
        \gamma_z = \beta\gamma_{z + 1} + \delta\gamma_{z - 1},
    \end{equation}
    with boundary conditions $\gamma_0 = 1$ and $\gamma_N = 0$. We note that~\eqref{gammarecrel} is a linear homogeneous second-order recurrence relation, so if we find two particular solutions (in this case $1$ and $\left(\delta / \beta\right)^z$), the general solution is a linear combination of them. The coefficients $A$ and $B$ of this linear combination are found via the boundary conditions:
    \begin{equation*}
        A + B = 1 \quad \text{ and } \quad A + B\left(\delta / \beta\right)^N = 0,
    \end{equation*}
    which solve to give~\eqref{gammadef}.
\end{proof}
\begin{rem}
    By symmetry, we note that the probability to absorb into the $N$ state (having started at $N_0 = z$) is equal to $1 - \gamma_z$, as can be seen by making the change of variables
    \begin{equation}\label{reversemarkov}
        \{\beta, \delta, z\} \to \{\delta, \beta, N - z\},
    \end{equation}
    which allows us to view the Markov chain in reverse, in a sense. Therefore we conclude that only on a set of zero probability does this Markov chain not reach an absorbing state. Sometimes, we will consider the large-population limit $N \to \infty$, where then the extinction probability is simply $(\delta / \beta)^z$. For large enough $i$, we then obtain the approximation for the survival probability $\P{N_i > 0 \, \middle| \, N_0 = 1} \simeq 1 - \delta / \beta$.
\end{rem}
\begin{rem}
    A second interpretation of this Markov chain is of a biased random walk on the integers $\{0, \dots, N\}$. This specific set-up with absorbing boundaries is known as the gambler's ruin problem; the change of variables in~\eqref{reversemarkov} is analogous to putting ourselves in the other gambler's shoes.
\end{rem}
We can now describe the expected value of this Markov chain at step $i$. We assume that $N$ is large enough; that is, that the maximum population size $i + 1$ (having started at $N_0 = 1$) is less than $N$.
\begin{prop}\label{expectedniprop}
    The expected population at step $i$ of the birth-death process in Figure~\ref{chain-tree-fig}a is
    \begin{equation}\label{noncondexpectedni}
        \E{N_i \, \middle| \, N_0 = 1} = \sum_{\substack{n = 1 \\ i - n \text{ odd}}}^{i + 1}\frac{n^2}{i + 1}\binom{i + 1}{\frac{i - (n - 1)}{2}}\beta^{\frac{i + (n - 1)}{2}}\delta^{\frac{i - (n - 1)}{2}},
    \end{equation}
    and when conditioned on survival, we have $\E{N_i \, \middle| \, N_0 = 1, N_i > 0} = \E{N_i \, \middle| \, N_0 = 1} / \P{N_i > 0 \, \middle| \, N_0 = 1}$.
\end{prop}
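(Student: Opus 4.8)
The plan is to treat the process over its first $i$ steps as a biased random walk on the non-negative integers with birth ($+1$, probability $\beta$) and death ($-1$, probability $\delta$), absorbed at $0$ and started at $1$. Because from $N_0 = 1$ the population cannot exceed $i + 1 < N$ within $i$ steps, the upper absorbing state at $N$ never comes into play, so I may ignore it and work on $\{0, 1, 2, \dots\}$ with $0$ the only relevant barrier. For $n \geq 1$ the event $\{N_i = n\}$ is then exactly that of a lattice path from $1$ to $n$ of length $i$ that stays strictly positive (never touches the absorbing state $0$). Such a path has $b = (i + n - 1)/2$ up-steps and $d = (i - n + 1)/2$ down-steps; integrality of $b$ forces $i - n$ to be odd, the inequalities $1 \leq n \leq i + 1$ fix the summation range, and every path with these step counts carries the common probability weight $\beta^b\delta^d$.

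The heart of the argument is to count the strictly-positive paths by the reflection principle --- the gambler's-ruin set-up already noted in the remark above. The number of unrestricted paths from $1$ to $n$ is $\binom{i}{b}$; reflecting the starting point across the barrier, the number of these that touch $0$ equals the number of unrestricted paths from $-1$ to $n$, namely $\binom{i}{b + 1}$. Subtracting and attaching the common weight gives
\begin{equation*}
    \P{N_i = n \, \middle| \, N_0 = 1} = \left(\binom{i}{b} - \binom{i}{b + 1}\right)\beta^b\delta^d.
\end{equation*}

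A short manipulation then matches the statement. Using $2b + 1 - i = n$ one obtains $\binom{i}{b} - \binom{i}{b + 1} = \frac{n}{b + 1}\binom{i}{b} = \frac{n}{i + 1}\binom{i + 1}{b + 1}$, and the symmetry $\binom{i + 1}{b + 1} = \binom{i + 1}{(i - (n - 1))/2}$ produces exactly the binomial coefficient in~\eqref{noncondexpectedni}; multiplying $\P{N_i = n \mid N_0 = 1}$ by $n$ and summing over admissible $n$ supplies the factor $n^2$ and reproduces the formula. The conditioned identity is then immediate: since the $n = 0$ term contributes nothing to the expectation, $\E{N_i \, \middle| \, N_0 = 1} = \sum_{n \geq 1} n\,\P{N_i = n \, \middle| \, N_0 = 1} = \P{N_i > 0 \, \middle| \, N_0 = 1}\,\E{N_i \, \middle| \, N_0 = 1, N_i > 0}$, which rearranges to the claimed quotient.

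The main obstacle is applying the reflection principle correctly in the presence of absorption: I must count paths that remain strictly positive throughout (not merely end positive), confirm that the weight $\beta^b\delta^d$ is genuinely shared by the restricted and the reflected families so that it factors cleanly out of the subtraction, and keep the parity condition and the range of $n$ mutually consistent. Once $\P{N_i = n \mid N_0 = 1}$ is established, the binomial identity and the final summation are routine.
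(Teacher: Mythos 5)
Your proposal is correct, and it reaches the formula by a genuinely different route than the paper. The paper works with powers of the transition matrix: it derives the recurrence $T^{i + 1}_{1, n} = \beta T^i_{1, n - 1} + \delta T^i_{1, n + 1}$ (plus boundary cases), posits the Catalan-triangle ansatz $T^i_{1, n} = \frac{n}{i + 1}\binom{i + 1}{\frac{i - (n - 1)}{2}}\beta^{\frac{i + (n - 1)}{2}}\delta^{\frac{i - (n - 1)}{2}}$, and then verifies by direct algebra that the ansatz satisfies the recurrence; the lattice-path interpretation of those coefficients is only supplied afterwards, as a remark and Corollary~\ref{bdseqcoro}. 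You instead derive the one-step distribution constructively: identifying $\{N_i = n\}$, $n \geq 1$, with strictly positive lattice paths from $1$ to $n$ (valid because absorption at $0$ makes the absorbed-chain event coincide with the never-touching-zero event, and the upper barrier is irrelevant since $i + 1 < N$, exactly the standing assumption the paper makes before the proposition), applying the reflection principle to get $\binom{i}{b} - \binom{i}{b + 1}$ surviving paths with common weight $\beta^b\delta^d$, and then using $2b + 1 - i = n$ and binomial symmetry to match the stated coefficient — all of which checks out. What your approach buys is that the closed form is deduced rather than guessed, and it absorbs the combinatorial content of Corollary~\ref{bdseqcoro} (the count of surviving birth-death sequences) into the proof itself; what the paper's approach buys is uniformity, since the ansatz-plus-recurrence verification is the same machinery it reuses for the SFS and DD propositions, and it sidesteps the bookkeeping you rightly flag as the delicate point (strict positivity throughout versus positivity at the end, and the shared path weight) in applying reflection. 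The final conditioning step is identical in both arguments.
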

\begin{proof}
    Consider the transition matrix $T$ (independent of $i \geq 1$ by the homogeneity of our Markov chain):
    \begin{equation*}\label{transitionmatrix}
        T = \begin{pmatrix}
            1 & 0 & 0 & \cdots &  0 & 0\\
            \delta & 0 & \beta & \cdots & 0 & 0\\
            0 & \delta & 0 & \cdots & 0 & 0\\
            \vdots & \vdots & \vdots & \ddots & \vdots & \vdots\\
            0 & 0 & 0 & \cdots & 0 & \beta\\
            0 & 0 & 0 & \cdots & 0 & 1
        \end{pmatrix} \quad \text{ for entries } \quad T_{m, n} = \P{N_i = n \, \middle| \, N_{i - 1} = m}.
    \end{equation*}
    A property of stochastic Markov matrices is that the entries of the powers of the matrix represent the multi-step probabilities: $T^i_{m, n} = \P{N_i = n \, \middle| \, N_0 = m}$ (this can be simply proven, or concluded from the Chapman-Kolmogorov equations). We are interested in the first row of these matrices $T^i$: that is, the case $m = 1$. We will consider $N > i + 1$, so that the maximum attainable population in $i$ steps (having started at $N_0 = 1$) is still less than the maximal and absorbing state $N$.
    
    Note that the survival probability $\P{N_i > 0 \, \middle| \, N_0 = 1}$ is thus
    \begin{equation}\label{survivalprob}
        \P{N_i > 0 \, \middle| \, N_0 = 1} = \sum_{n = 1}^{i + 1} T^i_{1, n} = 1 - T^i_{1, 0}.
    \end{equation}
    This sum as $i \to \infty$ is $1 - \delta / \beta$, as noted in the remark following Proposition~\ref{gammaprop}.
    
    For $1 \leq i < N - 1$ the entries of the matrix powers satisfy the following recurrence relation
    \begin{equation}\label{ti1nrr}
        T^{i + 1}_{1, n} = T_{n - 1, n}T^i_{1, n - 1} + T_{n + 1, n}T^i_{1, n + 1} = \beta T^i_{1, n - 1} + \delta T^i_{1, n + 1} \quad \text{ for } \quad 1 < n < N - 1,
    \end{equation}
    along with $T^{i + 1}_{1, 0} = T^i_{1, 0} + \delta T^i_{1, 1}$, $T^{i + 1}_{1, 1} = \delta T^i_{1, 2}$, $T^{i + 1}_{1, N - 1} = \beta T^i_{1, N - 2}$ and $T^{i + 1}_{1, N} = \beta T^i_{1, N - 1} + T^i_{1, N}$.

    For $n > 0$, consider the transition probability $T^i_{1, n}$: undergoing births or deaths, $i$ steps are taken to progress from state $1$ to state $n$. So, the sum of powers of $\beta$ and $\delta$ should be $i$ and their difference should be $n - 1$. The integer coefficients turn out to form the Catalan triangle (see the proof of Corollary~\ref{bdseqcoro} for an interpretation of this fact), so we obtain the ansatz
    \begin{equation}\label{ti1n}
        T^i_{1, n} = \frac{n}{i + 1}\binom{i + 1}{\frac{i - (n - 1)}{2}}\beta^{\frac{i + (n - 1)}{2}}\delta^{\frac{i - (n - 1)}{2}} \quad \text{ for } \quad 1 \leq n \leq i + 1 < N,
    \end{equation}
    where the binomial coefficients are defined to be zero if $i$ and $n$ have the same parity, since it is impossible to reach an odd (respectively even) destination $n$ in an even (respectively odd) number of steps having started from $1$. It is straightforward to show that the expression~\eqref{ti1n} satisfies the recurrence~\eqref{ti1nrr}:
    \begin{align*}
        \beta T^i_{1, n - 1} + \delta T^i_{1, n + 1} &= \beta\frac{n - 1}{i + 1}\binom{i + 1}{\frac{i - (n - 2)}{2}}\beta^{\frac{i + (n - 2)}{2}}\delta^{\frac{i - (n - 2)}{2}} + \delta\frac{n + 1}{i + 1}\binom{i + 1}{\frac{i - n}{2}}\beta^{\frac{i + n}{2}}\delta^{\frac{i - n}{2}}\\
        &= \frac{i! \beta^{\frac{i + n}{2}}\delta^{\frac{i - n}{2} + 1}}{\left(\frac{i - n}{2}\right)!\left(\frac{i + n}{2}\right)!}\left(\frac{n - 1}{\frac{i - n}{2} + 1} + \frac{n + 1}{\frac{i + n}{2} + 1}\right)\\
        &= \frac{n(i + 1)!}{\left(\frac{i - n}{2} + 1\right)!\left(\frac{i + n}{2} + 1\right)!} \beta^{\frac{i + n}{2}}\delta^{\frac{i - n}{2} + 1}\\
        &= T^{i + 1}_{1, n}.
    \end{align*}
    Using~\eqref{ti1n}, the expected population size non-conditioned on survival is then
    \begin{equation*}
        \E{N_i \, \middle| \, N_0 = 1} = \sum_{n = 0}^{i + 1} n\P{N_i = n \, \middle| \, N_0 = 1} = \sum_{\substack{n = 1 \\ i - n \text{ odd}}}^{i + 1}\frac{n^2}{i + 1}\binom{i + 1}{\frac{i - (n - 1)}{2}}\beta^{\frac{i + (n - 1)}{2}}\delta^{\frac{i - (n - 1)}{2}},
    \end{equation*}
    as desired. Conditioning on survival, we find
    \begin{align*}
        \E{N_i \, \middle| \, N_0 = 1, N_i > 0} &= \sum_{n = 0}^{i + 1} n\P{N_i = n \, \middle| \, N_0 = 1, N_i > 0}\\
        &= \sum_{n = 0}^{i + 1} n\frac{\P{N_i = n \cap N_i > 0 \, \middle| \, N_0 = 1}}{\P{N_i > 0 \, \middle| \, N_0 = 1}}\\
        &= \sum_{n = 1}^{i + 1} n\frac{\P{N_i = n \, \middle| \, N_0 = 1}}{\P{N_i > 0 \, \middle| \, N_0 = 1}}\\
        &= \frac{\E{N_i \, \middle| \, N_0 = 1}}{\P{N_i > 0 \, \middle| \, N_0 = 1}}.\qedhere
    \end{align*}
\end{proof}
\begin{coro}\label{zetacoro}
    For $k$ the number of births out of $i \leq 2k$ trials, we have
    \begin{equation}\label{ti1nzetabinom}
        T^i_{1, k - (i - k) + 1} = \zeta_{k, i}\P{X = k},
    \end{equation}
    where $X \sim \text{Binom}(i, \beta)$ is a binomially-distributed random variable with probability $\beta$ of success and $\zeta_{k, i}$ is defined for $\lfloor(i + 1) / 2\rfloor \leq k \leq i$ by
    \begin{equation}\label{zetadef}
        \zeta_{k, i} = 1 - \frac{i - k}{k + 1} \in (0, 1].
    \end{equation}
\end{coro}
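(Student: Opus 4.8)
The plan is to obtain~\eqref{ti1nzetabinom} by direct substitution into the closed form~\eqref{ti1n} already established for $T^i_{1, n}$, followed by an elementary binomial simplification. First I would set the destination index to $n = k - (i - k) + 1 = 2k - i + 1$, so that $n - 1 = 2k - i$. Substituting this into the two exponents appearing in~\eqref{ti1n} gives $\frac{i + (n - 1)}{2} = k$ and $\frac{i - (n - 1)}{2} = i - k$, so the weight collapses to $\beta^k\delta^{i - k}$ --- precisely the weight carried by $\P{X = k} = \binom{i}{k}\beta^k\delta^{i - k}$ for $X \sim \text{Binom}(i, \beta)$. Thus the whole content of the corollary reduces to matching the two combinatorial prefactors.

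Next I would simplify the binomial coefficient. With $n = 2k - i + 1$ the coefficient in~\eqref{ti1n} is $\binom{i + 1}{i - k} = \binom{i + 1}{k + 1}$, and the factorial identity $\binom{i + 1}{k + 1} = \frac{i + 1}{k + 1}\binom{i}{k}$ lets me cancel the $(i + 1)$ against the leading factor $\frac{n}{i + 1} = \frac{2k - i + 1}{i + 1}$ in~\eqref{ti1n}. This leaves
\[
    T^i_{1, 2k - i + 1} = \frac{2k - i + 1}{k + 1}\binom{i}{k}\beta^k\delta^{i - k}.
\]
Finally, recognising $\frac{2k - i + 1}{k + 1} = 1 - \frac{i - k}{k + 1} = \zeta_{k, i}$ from~\eqref{zetadef} and $\binom{i}{k}\beta^k\delta^{i - k} = \P{X = k}$ yields~\eqref{ti1nzetabinom}.

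The only genuine bookkeeping is the admissible range and nonvanishing conditions. I would check that $\lfloor(i + 1)/2\rfloor \leq k \leq i$ is equivalent to $1 \leq n \leq i + 1$ (the range of validity of~\eqref{ti1n}): $n \geq 1$ is $2k \geq i$, i.e.\ $k \geq \lceil i / 2\rceil = \lfloor(i + 1)/2\rfloor$ (equivalently the stated $i \leq 2k$), while $n \leq i + 1$ is $k \leq i$. Since $n + i = 2k + 1$ is odd, $n$ and $i$ have opposite parity, so the binomial coefficient in~\eqref{ti1n} is indeed nonzero, consistent with $T^i_{1, n} > 0$. The claim $\zeta_{k, i} \in (0, 1]$ then follows because $i - k \geq 0$ gives $\zeta_{k, i} \leq 1$ (with equality at $k = i$) and $2k - i + 1 \geq 1$ gives $\zeta_{k, i} \geq \frac{1}{k + 1} > 0$. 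There is no real obstacle here: the corollary is essentially a re-indexing of~\eqref{ti1n}, and the one thing to watch is keeping the parity and range conditions aligned between the two parametrisations.
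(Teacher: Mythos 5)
Your proposal is correct and follows essentially the same route as the paper: both proofs substitute $n = k - (i - k) + 1$ directly into~\eqref{ti1n} and simplify the binomial prefactor (via $\binom{i + 1}{k + 1} = \frac{i + 1}{k + 1}\binom{i}{k}$, or equivalently the paper's intermediate form $\frac{2n}{i + n + 1}\P{X = \frac{i + (n - 1)}{2}}$) to recognise $\zeta_{k, i}\P{X = k}$. Your additional bookkeeping on the range, parity and the bounds $\zeta_{k, i} \in (0, 1]$ is sound and slightly more explicit than the paper's one-line computation.
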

\begin{proof}
    Write $k$ for the number of births out of the $i$ trials, so that $n = k - (i - k) + 1$. Then,~\eqref{ti1n} becomes
    \begin{equation*}
        T^i_{1, n} = \frac{2n}{i + n + 1}\P{X = \frac{i + (n - 1)}{2}} = \left(1 - \frac{i - k}{k + 1}\right)\P{X = k}.\qedhere
    \end{equation*}
\end{proof}
\begin{rem}
    By itself, $\P{X = k}$ gives the probability of $k$ successes out of $i$ trials, where the successes can occur anywhere. The inclusion of a factor of $\zeta_{k, i}$ in~\eqref{ti1nzetabinom} thus describes the case where the number of successes (births) must outweigh the number of failures (deaths) after any of the first $1 \leq i' \leq i$ trials had the experiment ended there, since we require the population to be non-extinct at each step. For example, we want the ordered sequence $bbdd$ of births ($b$) and deaths ($d$) to contribute to $T^4_{1, 1}$, but we don't want to count the sequence $bddb$, as the final $b$ in the latter isn't sensible, since the population is extinct after the third step. To formalise this, we state the following definitions and ensuing corollary.
\end{rem}
A $(p, q)$-birth-death sequence is a word made from $p$ letters $b$ and $q$ letters $d$, and a non-empty word obtained by removing (possibly zero) letters from the end of the original word is called a truncation. For example, $bbdbd$ and $bbdb$ are both truncations of the $(3, 2)$-birth-death sequence $bbdbd$. A $(p, q)$-birth-death sequence is called surviving if none of its $p + q$ truncations contain strictly more $d$s than $b$s. We then have the following interpretation of the coefficients $\zeta_{k, i}$.
\begin{coro}\label{bdseqcoro}
    The proportion of surviving $(k, i - k)$-birth-death sequences is $\zeta_{k, i}$.
\end{coro}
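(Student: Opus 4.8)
The plan is to leverage the probabilistic interpretation of the matrix powers $T^i_{1, n}$ already established, rather than proving a fresh ballot-type identity from scratch. The key observation is that because the state $0$ is absorbing ($T_{0, 0} = 1$), any trajectory of the Markov chain starting at $N_0 = 1$ and ending at $N_i = n$ with $n \geq 1$ must avoid $0$ at every intermediate step; conversely, every path from $1$ to $n$ that stays at or above $1$ throughout contributes to $T^i_{1, n}$. Reading a birth as the letter $b$ and a death as the letter $d$, a path with $k$ births and $i - k$ deaths ends at $n = k - (i - k) + 1$, and the requirement that the population never hit $0$ is exactly the condition that every prefix (truncation) satisfy $\#b \geq \#d$, that is, that the associated $(k, i - k)$-birth-death sequence be surviving.

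First I would fix $k$ with $\lfloor (i + 1) / 2\rfloor \leq k \leq i$ and $n = k - (i - k) + 1$, and write $S_{k, i}$ for the number of surviving $(k, i - k)$-birth-death sequences. Since each individual word of $k$ births and $i - k$ deaths carries chain-probability $\beta^k \delta^{i - k}$, summing over precisely the surviving ones gives
\[
    T^i_{1, n} = S_{k, i}\,\beta^k \delta^{i - k}.
\]
Next I would invoke Corollary~\ref{zetacoro}, which gives $T^i_{1, n} = \zeta_{k, i}\,\P{X = k}$ with $\P{X = k} = \binom{i}{k}\beta^k \delta^{i - k}$. Equating the two expressions and cancelling the common factor $\beta^k \delta^{i - k}$ yields $S_{k, i} = \zeta_{k, i}\binom{i}{k}$. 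Because the total number of $(k, i - k)$-birth-death sequences is $\binom{i}{k}$, the surviving proportion is $S_{k, i} / \binom{i}{k} = \zeta_{k, i}$, which is the claim.

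The main obstacle is interpretive rather than computational: one must argue carefully that $T^i_{1, n}$ for $n \geq 1$ enumerates exactly the non-extinction paths, each weighted uniformly by $\beta^k \delta^{i - k}$. This rests on two facts, namely that absorption at $0$ is permanent so no contributing path can dip to $0$ and recover, and that every surviving word of length $i$ with $k$ births realises a valid chain trajectory from $1$ to $n$; making this correspondence a genuine bijection is the heart of the argument, after which the identity follows at once from the already-proven Corollary~\ref{zetacoro}. As a sanity check, an independent route bypassing the Markov chain would count $S_{k, i}$ directly by the ballot problem (reflection principle or the Dvoretzky--Motzkin cycle lemma underlying the Catalan triangle mentioned earlier), giving $S_{k, i} = \frac{2k - i + 1}{k + 1}\binom{i}{k}$ and hence the same proportion $1 - (i - k) / (k + 1) = \zeta_{k, i}$; agreement of the two methods confirms the normalisation.
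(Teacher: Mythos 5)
Your proof is correct, but it takes a genuinely different route from the paper's. The paper's proof is purely combinatorial: it notes that $\binom{i}{k}$ counts all $(k, i-k)$-birth-death sequences, and then cites, as a known interpretation, that the Catalan-triangle coefficients $\zeta_{k,i}\binom{i}{k}$ appearing in~\eqref{ti1n} enumerate the $i$-step walks from $(0,1)$ to $(i, 2k - i + 1)$ that stay in the positive quadrant, which are precisely the surviving sequences. You instead extract the count from the Markov chain itself: since state $0$ is absorbing, $T^i_{1, n}$ with $n \geq 1$ is a sum over exactly the surviving words, each carrying the identical weight $\beta^k\delta^{i-k}$, so comparing with Corollary~\ref{zetacoro} and cancelling $\beta^k\delta^{i-k}$ yields $S_{k, i} = \zeta_{k, i}\binom{i}{k}$. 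This is not circular, because~\eqref{ti1n} (and hence Corollary~\ref{zetacoro}) was established by verifying the ansatz against the recurrence~\eqref{ti1nrr}, not by path counting; your argument is therefore more self-contained relative to the paper's own development, effectively proving the ballot-type enumeration that the paper only quotes, while the paper's proof is shorter and supplies the direct combinatorial picture. Two points you should make explicit to close the argument: all paths from $1$ to $n$ in $i$ steps share the same number of births $k = (i + n - 1)/2$, which is why the path weights are uniform and can be factored out; and the standing assumption $i + 1 < N$ is needed so that the upper absorbing boundary at $N$ is never reached, ensuring every surviving word is a valid trajectory whose steps all carry probability $\beta$ or $\delta$.
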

\begin{proof}
    By observation, the binomial coefficient $\binom{i}{k}$ is the total number of $(k, i - k)$-birth-death sequences. On the other hand, another interpretation of the Catalan triangle coefficients $\zeta_{k, i}\binom{i}{k}$ arising in~\eqref{ti1n} is the number of $i$-step walks from $(0,1)$ to $(i, k - (i - k) + 1)$, where each step goes from $(x, y)$ to $(x + 1, y \pm 1)$ and the walk stays in the positive quadrant. Therefore, they count the number of surviving $(k, i - k)$-birth-death sequences.
\end{proof}
\begin{rem}    
    Indeed, the definition~\eqref{zetadef} of $\zeta_{k, i}$ resembles the large-$i$ survival probability $1 - \delta / \beta$, since $i - k$ is the number of deaths and $k + 1$ is one more than the number of births. Thus the expression~\eqref{ti1nzetabinom} can be interpreted as this `survival probability' $\zeta_{k, i}$ multiplied by a binomial distribution, which doesn't see the absorbing boundary at $0$.
\end{rem}
Now, we can use our new expression~\eqref{ti1nzetabinom} for the transition probabilities to compare the exact expected population with an intuitive linear approximation.
\begin{rem}
    The sums in the expected population~\eqref{noncondexpectedni} and the survival probability~\eqref{survivalprob} only involve terms $1 \leq n \leq i + 1$ where $i - n$ is odd. When $n = k - (i - k) + 1$, this translates to summing over integers $\lfloor(i + 1) / 2\rfloor \leq k \leq i$. Using~\eqref{survivalprob} and~\eqref{ti1nzetabinom}, the expected population size conditional on survival is then
    \begin{equation}\label{condexpectedni}
        \E{N_i \, \middle| \, N_0 = 1, N_i > 0} = \frac{\sum_{k = \lfloor\frac{i + 1}{2}\rfloor}^i(k - (i - k) + 1)\zeta_{k, i}\P{X = k}}{\sum_{k' = \lfloor\frac{i + 1}{2}\rfloor}^i\zeta_{k', i}\P{X = k'}}.
    \end{equation}
\end{rem}
\begin{rem}
    Consider a na\"ive derivation of the expected population size, conditioning on step $i - 1$
    \begin{equation}\label{linearapprox}
        N_i = \beta (N_{i - 1} + 1) + \delta(N_{i - 1} - 1) = N_{i - 1} + \beta - \delta = \cdots = (\beta - \delta)i + 1,
    \end{equation}
    and taking expectations of both sides. This is an appropriate linear approximation in the limit of low death, but doesn't correctly condition on survival, as shown by the discrepancy for low $\beta$ (see Figure~\ref{n-fig}).
\end{rem}
\begin{figure}[ht]
    \centering
    \includegraphics[width = 0.61\textwidth]{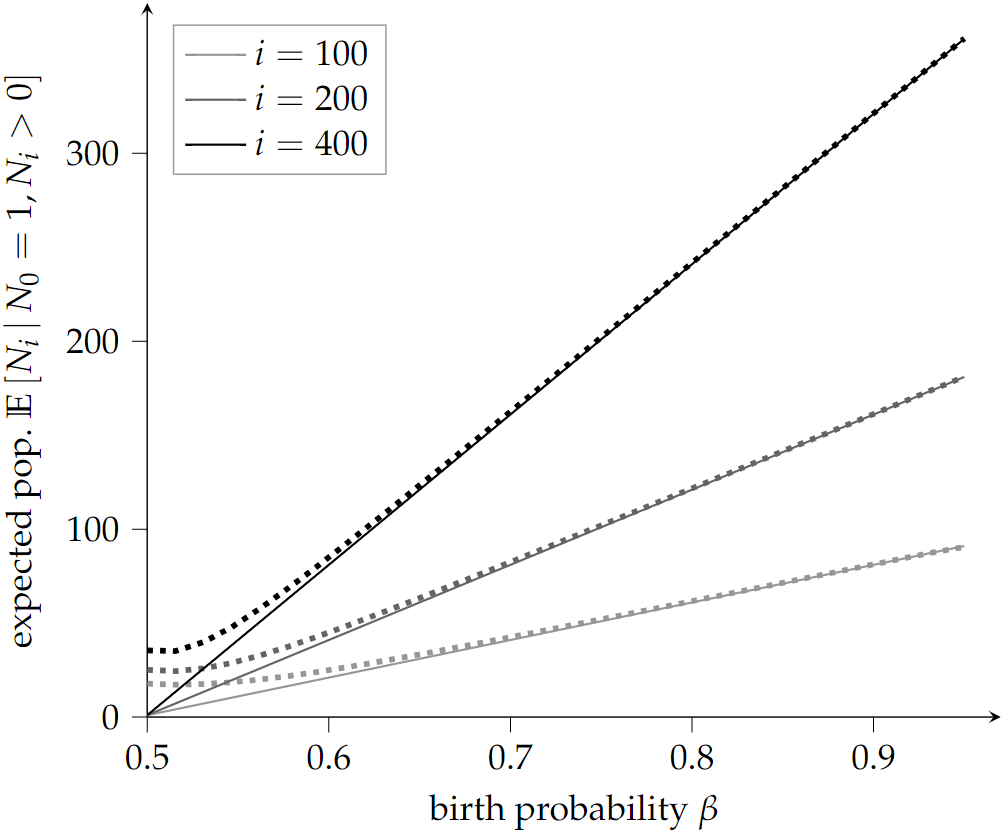}
    \caption{Plot of the expected population size conditioned on survival $\E{N_i \, \middle| \, N_0 = 1, N_i > 0}$ versus the birth probability $\beta$ for different numbers of steps $i = 100$ (pale grey), $i = 200$ (dark grey) and $i = 400$ (black). The dotted lines show the exact expression from Proposition~\ref{expectedniprop}, and the solid lines are the linear approximation $(\beta - \delta)i + 1$, showing excellent agreement for low death $\delta \ll \beta$ regardless of $i$.}
    \label{n-fig}
\end{figure}
\begin{rem}
     A heuristic derivation of the linear approximation~\eqref{linearapprox} from the exact expression~\eqref{condexpectedni} is the following. For large $i$, the probability mass function of the binomial distribution tends to a function with mean $i\beta = k$, the expected number of births in $i$ steps, and variance $i\beta(1 - \beta)$. When computing the sums in~\eqref{condexpectedni} for large $i$, consider the dominant $k = i\beta$ term (where we gloss over whether or not $i\beta$ is an integer) as the primary contribution to the sum:
     \begin{equation*}
        \frac{\sum_k(k - (i - k) + 1)\zeta_{k, i}\P{X = k}}{\sum_{k'}\zeta_{k', i}\P{X = k'}} \simeq \frac{\left(i\beta - (i - i\beta) + 1\right)\zeta_{i\beta, i}}{\zeta_{i\beta, i}} = (2\beta - 1)i + 1,
     \end{equation*}
     recovering~\eqref{linearapprox}. It is also worth noting that $\zeta_{i\beta, i} = 1 - (i - i\beta) / (i\beta + 1)$ coincides with the large-$i$ survival probability $1 - \delta / \beta$, supporting our intuition that $\zeta_{k, i}$ can be thought of as a survival probability of sorts, as anticipated by Corollary~\ref{bdseqcoro} and the remark that follows it.
\end{rem}
The exact survival probability at step $i$ is given explicitly by~\eqref{survivalprob} and~\eqref{ti1n}. A related result from Feller~\cite{fellerIntroductionProbabilityTheory1968} gives the distribution of times until the Markov process goes extinct.
\begin{prop}[Feller]\label{fellertrigprop}
    Consider a random walk on the integers $\{0, \dots, a\}$ (with absorption on both end points) beginning at $z$ with probability $p \in (0, 1)$ of moving right and $q = 1 - p$ of moving left. The probability of absorption on the $0$ boundary at step $n$ is
    \begin{equation}\label{felleroriginal}
        u_{z, n}(p, q; a) = 2^n p^{(n - z) / 2} q^{(n + z) / 2}\frac{1}{a}\sum_{k = 1}^{a - 1}\cos^{n - 1}\frac{\uppi k}{a}\sin\frac{\uppi k}{a}\sin\frac{\uppi zk}{a}.
    \end{equation}
\end{prop}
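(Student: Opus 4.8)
The plan is to solve the absorption problem by spectral decomposition of the one-step transition operator of the killed walk, which is the discrete-time analogue of solving a diffusion equation on an interval with Dirichlet boundary conditions. First I would reduce the absorption probability to the occupation probabilities of the walk killed at both endpoints. Writing $g_m(z \to y)$ for the probability that the walk started at $z$ sits at an interior site $y \in \{1, \dots, a-1\}$ at step $m$ without having previously been absorbed, a single-step decomposition of the \emph{last} transition gives $u_{z, n}(p, q; a) = q\, g_{n-1}(z \to 1)$, since the only way to be absorbed at $0$ at step $n$ is to occupy site $1$ at step $n - 1$ and then step left. The kernel $g_m$ itself satisfies the forward recurrence
\begin{equation*}
  g_m(z \to y) = p\, g_{m-1}(z \to y - 1) + q\, g_{m-1}(z \to y + 1), \qquad 1 \le y \le a - 1,
\end{equation*}
together with the Dirichlet conditions $g_m(z \to 0) = g_m(z \to a) = 0$ and the initial condition $g_0(z \to y) = \delta_{z, y}$.

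Next I would separate variables, seeking product solutions $g_m(z \to y) = \lambda^m \psi(y)$. This reduces the recurrence to the eigenvalue problem $p\,\psi(y - 1) + q\,\psi(y + 1) = \lambda\,\psi(y)$ on the interior with $\psi(0) = \psi(a) = 0$. The operator is not symmetric because of the bias $p \neq q$, so I would remove the asymmetry with the substitution $\psi(y) = (p / q)^{y / 2}\phi(y)$, which conjugates it to the symmetric operator $\sqrt{pq}\,(S_+ + S_-)$, with $S_\pm$ the unit shifts. Its Dirichlet eigenfunctions are the familiar sines $\phi_k(y) = \sin(\uppi k y / a)$ for $k = 1, \dots, a - 1$, with eigenvalues $\lambda_k = 2\sqrt{pq}\cos(\uppi k / a)$; I would verify this directly using the product-to-sum identity $\sin(\uppi k(y - 1) / a) + \sin(\uppi k(y + 1) / a) = 2\cos(\uppi k / a)\sin(\uppi k y / a)$.

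Finally I would assemble $g_m$ as a superposition over these modes and fix the coefficients from $g_0(z \to y) = \delta_{z, y}$ using the orthogonality relation $\sum_{y = 1}^{a - 1}\sin(\uppi k y / a)\sin(\uppi k' y / a) = \tfrac{a}{2}\delta_{k, k'}$, obtaining
\begin{equation*}
  g_m(z \to y) = (p / q)^{(y - z) / 2}\,\frac{2}{a}\sum_{k = 1}^{a - 1}\lambda_k^{\,m}\sin\frac{\uppi k z}{a}\sin\frac{\uppi k y}{a}.
\end{equation*}
Setting $y = 1$, $m = n - 1$, multiplying by $q$, and collecting the powers of $p$ and $q$ then yields~\eqref{felleroriginal}. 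I expect the main obstacle to be precisely this last bookkeeping step: the non-self-adjoint eigenvalue problem forces the weight $(p / q)^{(y - z) / 2}$ into every mode, and tracking how this weight combines with the factor $\lambda_k^{\,n - 1} = (2\sqrt{pq})^{n - 1}\cos^{n - 1}(\uppi k / a)$ and the boundary flux $q$ is where the exponents $(n - z) / 2$ of $p$ and $(n + z) / 2$ of $q$ in the stated formula must emerge, and is the most error-prone part of the calculation.
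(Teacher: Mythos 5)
Your derivation is correct, and it supplies what the paper omits: the paper's ``proof'' of this proposition is only the citation ``See Chapter XIV,4 of Feller'', so there is no argument in the paper to compare against step by step. Your route---reducing absorption to the killed walk's occupation of site $1$ via $u_{z, n}(p, q; a) = q\, g_{n - 1}(z \to 1)$, symmetrising the biased transition operator by the conjugation $\psi(y) = (p / q)^{y / 2}\phi(y)$, diagonalising in the discrete sine basis, and expanding the delta initial condition by orthogonality---is the classical spectral treatment, and is essentially the method of trigonometric particular solutions that Feller himself uses in the cited chapter, so you have in effect reconstructed the outsourced proof. The bookkeeping you flag as the risky step does close correctly: setting $y = 1$ and $m = n - 1$ in your expression for $g_m$ and multiplying by $q$ gives the prefactor
\begin{equation*}
    q\left(\frac{p}{q}\right)^{(1 - z) / 2}\left(2\sqrt{pq}\right)^{n - 1}\frac{2}{a} = \frac{2^n}{a}\, p^{(n - z) / 2}\, q^{(n + z) / 2},
\end{equation*}
which is exactly that of~\eqref{felleroriginal}. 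If you write this up in full, make two small points explicit: first, the functions $\sin(\uppi k y / a)$ for $k = 1, \dots, a - 1$ are $a - 1$ pairwise-orthogonal nonzero vectors in the $(a - 1)$-dimensional space of functions on the interior sites, hence a basis, so the expansion of the initial condition $\delta_{z, y}$ is legitimate; second, the Dirichlet convention $g_m(z \to 0) = g_m(z \to a) = 0$ in the forward recurrence is precisely what encodes killing (absorbed paths never re-enter the interior), which is what makes the recurrence valid at the edge sites $y = 1$ and $y = a - 1$.
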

\begin{proof}
    See Chapter XIV,4 of~\cite{fellerIntroductionProbabilityTheory1968}.
\end{proof}
\begin{rem}
    Sending $a \to \infty$ in~\eqref{felleroriginal} provides a simplification~\cite{fellerIntroductionProbabilityTheory1968}:
    \begin{equation}\label{feller2}
        \lim_{a \to \infty} u_{z, n}(p, q; a) = \frac{z}{n}\binom{n}{\frac{n - z}{2}}p^{\frac{n - z}{2}}q^{\frac{n + z}{2}},
    \end{equation}
    where the binomial coefficient implicitly assumes that $n$ and $z$ have the same parity. This is the case of only a single absorbing boundary at $0$, as in our considered birth-death case. The similarity between~\eqref{ti1n} and~\eqref{feller2} allows to state that $u_{1, i + 1} = \delta T^i_{1, 1}$, which is reasonable as the former is the probability of going extinct on step $i + 1$ and the latter is the probability of being at $1$ on step $i$ multiplied by the death probability $\delta$. We can then use~\eqref{feller2}, the previous argument and  Corollary~\ref{zetacoro} to find alternate expressions for the survival probability, where $X \sim \text{Binom}(i, \beta)$ as usual:
    \begin{equation*}
        \P{N_i > 0 \, \middle| \, N_0 = 1} = 1 - \sum_{\substack{n = 1 \\ n \text{ odd}}}^i u_{1, n}(\beta, \delta; N) = 1 - \delta\sum_{\substack{m = 0 \\ m \text{ even}}}^{i - 1}T^m_{1, 1} = 1 - \delta\sum_{\ell = 0}^{\lfloor(i - 1) / 2\rfloor}\frac{\P{X = \ell}}{\ell + 1}.
    \end{equation*}
\end{rem}

\subsection*{Proofs of recurrence relations results}

Our approximate expected distributions depend on Taylor expansions of expected values of functions of random variables, along with the expected population size low-death approximation~\eqref{linearapprox}.
\begin{lem}\label{taylorexp}
    Consider a function $f \in C^3$ of two random variables $A$ and $B$ with finite expectations; the second-order Taylor expansion of $\E{f(A, B)}$ about $(\E{A}, \E{B})$ is
    \begin{equation}\label{secondordertaylorexp}
        \E{f(A, B)} = f(\E{A}, \E{B}) + \frac{1}{2}f_{AA}\vari(A) + f_{AB}\cov(A, B) + \frac{1}{2}f_{BB}\vari(B) + O^3,
    \end{equation}
    where the second derivatives $f_{ij} = \partial_i\partial_jf$ are evaluated at $(\E{A}, \E{B})$ and $O^3$ is big-O notation for any cubic terms.
\end{lem}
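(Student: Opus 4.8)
The plan is to apply the multivariate version of Taylor's theorem to $f$ and then take expectations, exploiting that centred deviations have zero mean. First I would introduce the centred random variables $\tilde A = A - \E{A}$ and $\tilde B = B - \E{B}$, so that $\E{\tilde A} = \E{\tilde B} = 0$. Since $f \in C^3$, Taylor's theorem about the point $(\E{A}, \E{B})$ gives
\begin{equation*}
 f(A, B) = f(\E{A}, \E{B}) + f_A\tilde A + f_B\tilde B + \tfrac{1}{2}f_{AA}\tilde A^2 + f_{AB}\tilde A\tilde B + \tfrac{1}{2}f_{BB}\tilde B^2 + R,
\end{equation*}
where the first and second derivatives are evaluated at $(\E{A}, \E{B})$ and $R$ is the third-order remainder, expressible either in Lagrange form (involving third derivatives of $f$ at an intermediate point on the segment joining $(A, B)$ to $(\E{A}, \E{B})$) or in integral form.

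Next I would take the expectation of both sides and use linearity. The constant term is unaffected; the two linear terms vanish identically because $\E{\tilde A} = \E{\tilde B} = 0$; and the three quadratic terms yield, by the definitions of variance and covariance, $\E{\tilde A^2} = \vari(A)$, $\E{\tilde A\tilde B} = \cov(A, B)$ and $\E{\tilde B^2} = \vari(B)$. Collecting these reproduces exactly the right-hand side of~\eqref{secondordertaylorexp}, with $\E{R}$ playing the role of the error symbol $O^3$.

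The main obstacle is to justify that $\E{R}$ really is a genuine third-order quantity rather than an uncontrolled error. This requires two ingredients: that the third partial derivatives of $f$ are suitably controlled on the relevant range, which the $C^3$ hypothesis supplies locally, and that the third central moments $\E{|\tilde A|^3}$, $\E{|\tilde B|^3}$ and their mixed analogues are finite. Under these conditions the Lagrange form of $R$ is dominated by a constant times a sum of third central moments, so $\E{R}$ is cubic in the central moments of $A$ and $B$, as the notation $O^3$ records. In the applications that follow, $A$ and $B$ are concentrated near their means (the low-death, large-population regime), so the expansion is invoked only to leading order and the precise bookkeeping of $\E{R}$ is not needed beyond the knowledge that it is of third order.
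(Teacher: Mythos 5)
Your proposal is correct, and it is the standard argument: expand $f$ to second order about $(\E{A}, \E{B})$, kill the linear terms using $\E{A - \E{A}} = \E{B - \E{B}} = 0$, and identify the quadratic terms with $\vari(A)$, $\cov(A, B)$ and $\vari(B)$. Note that the paper itself states Lemma~\ref{taylorexp} without any proof, treating it as a known fact and proceeding directly to the remark on the first-order expansion and to Corollary~\ref{secondordercoro}; so your write-up supplies precisely the justification the paper leaves implicit. Your closing caveat is also a genuine improvement on the paper's phrasing: the hypothesis ``finite expectations'' is not literally enough for the right-hand side of~\eqref{secondordertaylorexp} to make sense --- one needs finite second moments for the variance and covariance terms, and finite third central moments (together with local control of the third derivatives, which $C^3$ gives) for $\E{R}$ to be an honest cubic error --- and you correctly flag that in the paper's applications (the low-death, large-population regime where $A$ and $B$ concentrate near their means) only the leading-order statement is ever used.
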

\begin{rem}
    The first-order expansion of $\E{f(A, B)}$ about $(\E{A}, \E{B})$ is $f(\E{A}, \E{B})$.
\end{rem}
\begin{coro}\label{secondordercoro}
    For $\E{B} \neq 0$ and $f(A, B) = A / B$, to second order we have
    \begin{equation*}\label{secondorderexp}
        \E{\frac{A}{B}} \simeq \frac{\E{A}}{\E{B}} - \frac{\cov(A, B)}{\E{B}^2} + \frac{\E{A}\vari(B)}{\E{B}^3} = \frac{\E{A}}{\E{B}} - \frac{\E{AB}}{\E{B}^2} + \frac{\E{A}\E{B^2}}{\E{B}^3}.
    \end{equation*} 
\end{coro}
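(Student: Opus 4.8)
The plan is to apply Lemma~\ref{taylorexp} directly to the specific function $f(A, B) = A / B$ and then rewrite the outcome using the definitions of covariance and variance. First I would record the relevant partial derivatives of $f$: one has $f_A = 1 / B$ and $f_B = -A / B^2$, and the three second derivatives $f_{AA} = 0$, $f_{AB} = -1 / B^2$ and $f_{BB} = 2A / B^3$. The hypothesis $\E{B} \neq 0$ is precisely what guarantees that $f$ is $C^3$ on a neighbourhood of $(\E{A}, \E{B})$, so that the expansion~\eqref{secondordertaylorexp} of the preceding lemma applies about that point.

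Next I would substitute these derivatives, evaluated at $(\E{A}, \E{B})$, into~\eqref{secondordertaylorexp}. Because $f_{AA} = 0$, the variance of $A$ drops out entirely, and the two surviving second-order contributions are $f_{AB}\cov(A, B) = -\cov(A, B) / \E{B}^2$ and $\tfrac{1}{2}f_{BB}\vari(B) = \E{A}\vari(B) / \E{B}^3$. Adding these to the zeroth-order term $f(\E{A}, \E{B}) = \E{A} / \E{B}$ immediately yields the first of the two claimed expressions.

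Finally, to reach the moment form on the right I would expand $\cov(A, B) = \E{AB} - \E{A}\E{B}$ and $\vari(B) = \E{B^2} - \E{B}^2$ and collect terms. The covariance piece contributes a spurious $+\E{A} / \E{B}$ and the variance piece a spurious $-\E{A} / \E{B}$, and these cancel, leaving exactly $\E{A} / \E{B} - \E{AB} / \E{B}^2 + \E{A}\E{B^2} / \E{B}^3$. There is no genuine obstacle here, as the corollary is a direct specialisation of Lemma~\ref{taylorexp}; the only point needing a little care is this bookkeeping of the cancelling $\E{A} / \E{B}$ terms when passing between the central-moment and raw-moment forms.
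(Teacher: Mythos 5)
Your proposal is correct and follows exactly the route the paper intends: Corollary~\ref{secondordercoro} is stated without proof as a direct specialisation of Lemma~\ref{taylorexp}, and your computation of the derivatives $f_{AA} = 0$, $f_{AB} = -1/B^2$, $f_{BB} = 2A/B^3$, together with the cancellation of the two $\E{A}/\E{B}$ terms when passing to raw moments, is precisely the implicit argument.
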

We can now derive the birth-death site frequency spectrum, conditioned on survival.
\begin{prop}\label{sfsprop}
    Starting from one mutation-free progenitor cell, the birth-death process with new mutations arising with expectation $\mu$ has approximate expected site frequency spectrum
    \begin{equation}\label{sfs}
        \E{S_{j, i}\,\middle|\,N_0 = 1, N_i > 0} \simeq \sum_{j' = 0}^\infty \frac{2\mu(\delta / \beta)^{j'}}{(j + j')(j + j' + 1)}\E{N_i\,\middle|\,N_0 = 1, N_i > 0}.
    \end{equation}
\end{prop}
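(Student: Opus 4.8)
The plan is to mirror the pure-birth derivation of \eqref{purebirthsfs}, augmenting the one-step law of total expectation \eqref{purebirthsfsrr} with the transitions that death induces on the spectrum. First I would condition on the spectrum at step $i$ and split the step into its two outcomes. With probability $\beta$ a uniformly chosen cell divides: a $j$-abundant mutation lies in that cell with probability $j/N_i$ and is then promoted to $(j+1)$-abundant, $(j-1)$-abundant mutations are promoted into the $j$-site, and $2\mu$ new $1$-abundant mutations appear in expectation. With probability $\delta$ a uniformly chosen cell dies: a $j$-abundant mutation loses a carrier with probability $j/N_i$ and is demoted to $(j-1)$-abundant, while $(j+1)$-abundant mutations are demoted into the $j$-site. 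Collecting terms and using $\beta+\delta=1$ gives
\begin{equation*}
 \E{S_{j,i+1}\mid\{S_{j',i}\}_{j'}} = S_{j,i} - \frac{jS_{j,i}}{N_i} + \frac{\beta(j-1)S_{j-1,i}}{N_i} + \frac{\delta(j+1)S_{j+1,i}}{N_i} + 2\mu\beta\,\delta_{1,j},
\end{equation*}
which reduces to \eqref{purebirthsfsrr} when $\delta=0$. The only genuinely new ingredient over the pure-birth case is the downward flux $\delta(j+1)S_{j+1,i}/N_i$ from deaths.

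Next I would take expectations and apply the first-order approximation $\E{A/B}\simeq\E{A}/\E{B}$ of Lemma~\ref{taylorexp} and Corollary~\ref{secondordercoro}, legitimate because the population variance is small in the low-death, large-population regime (Proposition~\ref{varianceprop}); this replaces every $N_i$ by $\E{N_i}$. Following the pure-birth argument I would posit the separable ansatz $\E{S_{j,i}}\simeq Q_j\E{N_i}$ with $Q_j$ independent of $i$, and use the linear growth $\E{N_{i+1}}-\E{N_i}\simeq\beta-\delta$ from \eqref{linearapprox}. Dividing by $\E{N_i}$ removes the $i$-dependence and leaves the stationary recurrence
\begin{equation*}
 (j+\beta-\delta)\,Q_j = \beta(j-1)Q_{j-1} + \delta(j+1)Q_{j+1} + 2\mu\beta\,\delta_{1,j}, \qquad Q_0 = 0.
\end{equation*}

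The crux is solving this recurrence, which, unlike the two-term pure-birth relation, is now second order. My key step would be to introduce the flux-like combination $A_j := \beta Q_j - \delta Q_{j+1}$; substituting $\delta(j+1)Q_{j+1}=(j+1)(\beta Q_j - A_j)$ and $\beta(j-1)Q_{j-1}=(j-1)(A_{j-1}+\delta Q_j)$ turns the recurrence for $j\geq 2$ into the single two-term relation $(j-1)A_{j-1}=(j+1)A_j$. This telescopes exactly as in the pure-birth case to $A_j = 2A_1/[j(j+1)]$, while the $j=1$ equation (where the source lives) fixes $A_1=\mu\beta$. Hence $\beta Q_j - \delta Q_{j+1} = 2\mu\beta/[j(j+1)]$, i.e.
\begin{equation*}
 Q_j - \frac{\delta}{\beta}\,Q_{j+1} = \frac{2\mu}{j(j+1)},
\end{equation*}
whose right-hand side is precisely the pure-birth spectrum \eqref{purebirthsfs}. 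Iterating this first-order relation upward and discarding the tail $(\delta/\beta)^n Q_{j+n}\to 0$ (valid since $\delta/\beta<1$ and $Q$ is bounded) telescopes to the geometric series $Q_j=\sum_{j'\geq 0}(\delta/\beta)^{j'}2\mu/[(j+j')(j+j'+1)]$; multiplying by $\E{N_i}$ gives \eqref{sfs}.

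I expect the main obstacle to be this solution step rather than the bookkeeping of death transitions or the now-routine first-order approximation: one must spot the telescoping combination $A_j$ that linearises the second-order recurrence, and then handle the boundary condition at infinity — the decay of $Q_j$ that selects the summable branch when the first-order relation is iterated upward, in contrast to the pure-birth case where telescoping proceeded from the finite boundary $Q_1=\mu$. As a cross-check I would verify directly that the geometric-series solution satisfies the stationary recurrence, the $j=1$ term reproducing the $2\mu\beta$ source automatically, or equivalently convert to the generating function $g(x)=\sum_{j\geq1}Q_jx^j$, whose ODE $[x-\beta x^2-\delta]g'(x)+(\beta-\delta)g(x)=2\mu\beta x - \delta Q_1$ is singular exactly at $x=1$ and $x=\delta/\beta$, regularity at the latter fixing $Q_1$.
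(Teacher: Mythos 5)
Your derivation is correct and reaches \eqref{sfs}, but by a genuinely different solution method than the paper's. Everything up to the stationary recurrence coincides with the paper's proof of Proposition~\ref{sfsprop}: the same death-augmented one-step relation (yours is the paper's after combining the $\beta$- and $\delta$-weighted loss terms into $-jS_{j,i}/N_i$), the same first-order replacement $\E{S_{j,i}/N_i}\simeq\E{S_{j,i}}/\E{N_i}$, the same $i$-independent ansatz, and the same use of $\E{N_{i+1}}-\E{N_i}\simeq\beta-\delta$ from \eqref{linearapprox}; your $j=1$ condition $\delta Q_2=\beta(Q_1-\mu)$ is exactly the paper's absorbed boundary condition $X_2=\beta(C-\mu)/\delta$ accompanying \eqref{xrecursion}. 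The divergence is in how the three-term recurrence is solved. The paper uses Laplace's method to obtain the integral representation $X_j\propto\int_0^1t^{j-1}(1-t)/(\beta/\delta-t)\,\dd t$, evaluates it as a ${}_2F_1$ series via a table identity, and normalises against $X_1=\mu$; this buys direct contact with the continuous-time integral of Gunnarsson \textit{et al.}, as the paper's subsequent remark exploits. You instead set $A_j=\beta Q_j-\delta Q_{j+1}$, which (using $\beta+\delta=1$) collapses the $j\geq2$ recurrence to $(j+1)A_j=(j-1)A_{j-1}$ --- I verified this substitution, and it works --- so telescoping proceeds exactly as in the pure-birth case, the source equation fixes $A_1=\mu\beta$, and iterating $Q_j-(\delta/\beta)Q_{j+1}=2\mu/[j(j+1)]$ upward yields the geometric series. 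Your route is more elementary (no integral tables or hypergeometric functions) and makes the structure of the answer transparent: the birth-death SFS is a $(\delta/\beta)$-geometric resummation of the pure-birth spectrum \eqref{purebirthsfs}. It is also arguably more self-consistent at the normalisation step, since your constant comes from the $j=1$ source equation itself, whereas the paper imposes $X_1=\mu$, a condition the final series actually satisfies only at $\delta=0$. The one point you should make explicit is the branch selection: discarding the tail $(\delta/\beta)^nQ_{j+n}$ requires that $Q_j$ not grow like $(\beta/\delta)^j$, which you can justify physically (e.g.\ $S_{j,i}=0$ for $j>N_i$, so the admissible solution must stay bounded in $j$); this plays the same role as the paper's implicit choice of the particular Laplace solution over the second, growing branch, so the two arguments are on equal footing of rigour there. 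Your generating-function cross-check is also consistent: $x-\beta x^2-\delta=-\beta(x-1)(x-\delta/\beta)$, so the singular points are where you say they are.
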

\begin{proof}
    As in the main text, we use the law of total expectation to write a recurrence relation for the SFS at step $i + 1$ conditional on the SFS at step $i$:
    \begin{align*}
        \E{S_{j, i + 1}} &= \E{\E{S_{j, i + 1}\,\middle|\,\{S_{j', i}\}_{j' \leq i}}}\\
        &= \E{S_{j, i} + \beta\left(-\frac{jS_{j, i}}{N_i} + \frac{(j - 1)S_{j - 1, i}}{N_i} + 2\mu\delta_{1,j}\right) + \delta\left(-\frac{jS_{j, i}}{N_i} + \frac{(j + 1)S_{j + 1, i}}{N_i}\right)},
    \end{align*}
    for $\delta_{\cdot, \cdot}$ the Kronecker delta symbol, and where we have omitted the conditioning on survival and the initial condition $N_0 = 1$ for brevity. The first term in each set of parentheses in the second line arises because any $j$-abundant mutations in the dividing (respectively dying) cell, the number of which is $jS_{j, i} / N_i$, will become $(j \pm 1)$-abundant and thus no longer contribute to the $j$-site. The second terms depict the corresponding contribution from any $(j \mp 1)$-abundant mutations in the dividing (respectively dying) cell, which will become $j$-abundant. The Kronecker delta source term represents the $2\mu$ new (thus $1$-abundant) mutations arising in any division.
    \begin{rem}
        For $N_0 = 1$, by observation, the expected population size and SFS at step $i = 1$ are $\E{N_1} = 2\beta$ and $\E{S_{1, 1}} = 2\mu\beta$, with the $\beta$ factors dropping if we condition on survival.
    \end{rem}
    In the pure-birth case, the missing telescoping calculation (for $Q_j = \E{S_{j, i}} / (i + 1)$) in the main text is
    \begin{equation*}
        Q_j = \frac{j - 1}{j + 1}Q_{j - 1} = \frac{j - 1}{j + 1}\frac{j - 2}{j}Q_{j - 2} = \cdots = \frac{2}{j(j + 1)}Q_1 = \frac{2\mu}{j(j + 1)},
    \end{equation*}
    which is indeed~\eqref{sfs} when $\delta = 0$ (as only the $j = 0$ term in the infinite sum remains). A proof similar to this was presented in~\cite{moellerMeasuresGeneticDiversification2022}.
    
    Now, for the birth-death case: if next we consider only the first-order approximation from~\eqref{secondordertaylorexp}
    \begin{equation*}
        \E{\frac{S_{j, i}}{N_i}\,\middle|\,N_0 = 1, N_i > 0} \simeq \frac{\E{S_{j, i}\,\middle|\,N_0 = 1, N_i > 0}}{\E{N_i\,\middle|\,N_0 = 1, N_i > 0}} = X_j,
    \end{equation*}
    where the final equality is the same ansatz as before (reasonable given our knowledge from~\cite{gunnarssonExactSiteFrequency2021a} that $\E{S_{j, i}\,\middle|\,N_0 = 1, N_i > 0} \propto \E{N_i\,\middle|\,N_0 = 1, N_i > 0}$). Again, we have a second-order non-homogeneous linear recurrence relation for $X_j$:
    \begin{equation*}
        X_j \E{N_{i + 1}\,\middle|\,N_0 = 1, N_i > 0} - X_j \E{N_i\,\middle|\,N_0 = 1, N_i > 0} = -j X_j + \beta(j - 1)X_{j - 1} + \delta(j + 1)X_{j + 1} + 2\mu\beta\delta_{1,j}.
    \end{equation*}
    For low death, where our first order approximation holds, from Figure~\ref{n-fig} and~\eqref{linearapprox} we note that the expected gain in population size in one time step is $\beta - \delta$. Thus the previous expression becomes
    \begin{equation}\label{xrecursion}
        \delta(j + 2)X_{j + 2} - (\beta - \delta + j + 1)X_{j + 1} + \beta jX_j = 0 \quad \text{ for } \quad j \geq 1,
    \end{equation}
    where we have absorbed the source term into the boundary conditions $X_1 = C$ and $X_2 = \beta(C - \mu) / \delta$.
    \begin{rem}
        One way to proceed is to use a shift operator $S$ (such that $SX_j = X_{j + 1}$) and factorise~\eqref{xrecursion}, solving first for one root of the ensuing quadratic equation in $S$ and then the other.
    \end{rem}
    Instead, we use Laplace's method~\cite{deijfenGrowingNetworksPreferential2009a} for solving a second-order recurrence relation of the form
    \begin{equation}\label{precursion}
        \left(\alpha_2(j + 2) + \beta_2\right)X_{j + 2} + \left(\alpha_1(j + 1) + \beta_1\right)X_{j + 1} + \left(\alpha_0 j + \beta_0\right)X_j = 0,
    \end{equation}
    which, in our case~\eqref{xrecursion}, we have $\{\alpha_i\}_{i = 0}^2 = \{\delta, -1, \beta\}$ and $\{\beta_i\}_{i = 0}^2 = \{0, \delta - \beta, 0\}$. To solve~\eqref{precursion} with appropriate boundary conditions we must find an interval $[a, b]$ and a function $h(t)$ such that
    \begin{equation*}
        \left[t^jh(t)\left(\alpha_2t^2 + \alpha_1t + \alpha_0\right)\right]_a^b = 0 \quad \text{ and } \quad \frac{h'(t)}{h(t)} = \frac{\beta_2t^2 + \beta_1t + \beta_0}{t(\alpha_2t^2 + \alpha_1t + \alpha_0)},
    \end{equation*}
    where $h(t)$ is defined up to a multiplicative constant determined by the boundary conditions. The solution to the recurrence relation~\eqref{precursion} is then
    \begin{equation}\label{pkexp}
        X_j \propto \int_a^b t^{j - 1}h(t) \dd t.
    \end{equation}
    Following~\cite{deijfenGrowingNetworksPreferential2009a}, we find $[a, b] = [0, 1]$ and
    \begin{equation*}\label{hexp}
        h(t) = \frac{1 - t}{\beta / \delta - t},
    \end{equation*}
    which is valid for $\delta > 0$. From a table of integrals~\cite{gradshteyn2007table} we have
    \begin{equation}\label{uvw}
        \int_0^1 x^{\lambda - 1}(1 - x)^{\mu - 1}(1 - \xi x)^{-\nu}\dd x = B(\lambda, \mu)\,{}_2 F_1(\nu, \lambda; \lambda + \mu; \xi)
    \end{equation}
    for $\Re(\lambda) > 0$, $\Re(\mu) > 0$ and $|\xi| < 1$, where we have defined the beta function $B(w, y)$ as
    \begin{equation*}
        B(w, y) = \int_0^1 s^{w - 1}(1 - s)^{y - 1}\dd s,
    \end{equation*}
    and the hypergeometric function ${}_2 F_1(k, l; m; z)$ as
    \begin{equation*}
        {}_2 F_1(k, l; m; z) = \sum_{n = 0}^\infty \frac{(k)_n (l)_n}{(m)_n}\frac{z^n}{n!},
    \end{equation*}
    for the rising Pochhammer symbol $(q)_n$
    \begin{equation*}
        (q)_n = \begin{cases}
        1 & \text{ if } n = 0,\\
        q(q + 1)\cdots(q + n - 1) &\text{ if } n > 0.
        \end{cases}
    \end{equation*}
    Substituting $\{\lambda, \mu, \nu, \xi\} \to \{j, 2, 1, \delta / \beta\}$ into~\eqref{uvw} and momentarily ignoring a constant factor of $\delta / \beta$ since $h(t)$ defined up to a constant,~\eqref{pkexp} becomes
    \begin{align*}
        X_j &\propto \int_0^1 t^{j - 1}\frac{1 - t}{1 - (\delta / \beta)t}\dd t\\
        &= B(j, 2)\,{}_2 F_1(1, j; j + 2; \delta / \beta)\\
        &= \left(\int_0^1 t^{j - 1}(1 - t)\dd t\right)\sum_{n = 0}^\infty \frac{(1)_n (j)_n}{(j + 2)_n}\frac{(\delta / \beta)^n}{n!}\\
        &= \left(\frac{1}{j} - \frac{1}{j + 1}\right)\sum_{n = 0}^\infty \frac{j(j + 1)}{(j + n)(j + n + 1)}\left(\frac{\delta}{\beta}\right)^n,
    \end{align*}
    which is~\eqref{sfs} after cancelling the factor $j(j + 1)$ and multiplying by $2\mu$ (so that $h(t)$ matches the boundary condition $X_1 = \E{S_{1, 1}} / \E{N_1} = \mu$).
\end{proof}
\begin{rem}
    The integral~\eqref{pkexp} is found via different methods in~\cite{gunnarssonExactSiteFrequency2021a} for the continuous time (stochastic population) case, though with bounds $[0, 1 - 1 / N]$. Thus in the large population limit these coincide, which aligns with our intuition that in the large population limit the continuous and discrete time cases should converge, as discussed in~\cite{gunnarssonExactSiteFrequency2021a}.
\end{rem}
We now derive the division distribution in the pure-birth and birth-death cases.
\begin{rem}
    For the expected pure-birth DD, the main text is simply missing an observation that the solution~\eqref{purebirthdd} solves the recurrence relation~\eqref{purebirthddrr}:
    \begin{equation*}
        \stir{i + 1}{\ell}\frac{2^\ell}{(i + 1)!} = \left(i\stir{i}{\ell} + \stir{i}{\ell - 1}\right)\frac{2^\ell}{(i + 1)!} = \frac{i}{i + 1}\stir{i}{\ell}\frac{2^\ell}{i!} + \frac{2}{i + 1}\stir{i}{\ell - 1}\frac{2^{\ell - 1}}{i!}.
    \end{equation*}
\end{rem}
\begin{prop}\label{ddfirstorderprop}
    For the birth-death process described in Figure~\ref{chain-tree-fig}a, the expected division distribution is approximated to first order by
    \begin{equation}\label{ddfirstorder}
        \E{D_{\ell, i}\,\middle|\,N_0 = 1, N_i > 0} \simeq \frac{\stir{i}{\ell}2^\ell\left(1 - \delta / \beta\right)^{-\ell}}{\sum_{\ell' = 1}^i\stir{i}{\ell'}2^{\ell'}\left(1 - \delta / \beta\right)^{-\ell'}}\E{N_i\,\middle|\,N_0 = 1, N_i > 0}.
    \end{equation}
\end{prop}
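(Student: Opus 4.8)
The plan is to mirror the recurrence-relation method used for the pure-birth DD in~\eqref{purebirthddrr} and for the birth-death SFS in Proposition~\ref{sfsprop}, now tracking the level $\ell$ of the division tree. First I would set up the conditional dynamics of $D_{\ell, i}$. A uniformly chosen cell has $\ell$ divisions in its history with probability $D_{\ell, i} / N_i$. On a birth event (probability $\beta$) such a cell leaves level $\ell$ and deposits two daughters at level $\ell + 1$; on a death event (probability $\delta$) it simply leaves level $\ell$. Collecting the gain at level $\ell$ from dividing $(\ell - 1)$-cells and the loss from both births and deaths of $\ell$-cells, the law of total expectation gives
\begin{equation*}
    \E{D_{\ell, i + 1}} = \E{D_{\ell, i} - \frac{D_{\ell, i}}{N_i} + \frac{2\beta D_{\ell - 1, i}}{N_i}},
\end{equation*}
where $\beta + \delta = 1$ has been used to combine the two loss terms. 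Applying the first-order approximation $\E{A / B} \simeq \E{A} / \E{B}$ from the remark after Lemma~\ref{taylorexp} then produces the closed linear recurrence $\E{D_{\ell, i + 1}} \simeq (1 - 1 / \E{N_i})\E{D_{\ell, i}} + (2\beta / \E{N_i})\E{D_{\ell - 1, i}}$, which reduces to~\eqref{purebirthddrr} when $\delta = 0$ and $N_i = i + 1$.

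Rather than solving this recurrence from scratch, I would verify the proposed closed form by substitution, exactly as in the remark immediately preceding the proposition for the pure-birth case. Writing $c = (1 - \delta / \beta)^{-1}$ and setting $F_i = \sum_{\ell'} \stir{i}{\ell'}(2c)^{\ell'}$, I would make the ansatz $\E{D_{\ell, i}} = (\E{N_i} / F_i)\stir{i}{\ell}(2c)^\ell$. Summing over $\ell$ gives $\sum_\ell \E{D_{\ell, i}} = \E{N_i}$ by construction, so the ansatz automatically has the partition structure claimed in~\eqref{ddfirstorder}; indeed, since $(2c)^\ell = 2^\ell(1 - \delta / \beta)^{-\ell}$, it is literally the right-hand side of~\eqref{ddfirstorder}. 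It then remains only to check that it satisfies the approximate recurrence.

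To do so I would substitute the ansatz and expand the left-hand side with the Stirling recurrence~\eqref{stirlingdef}, $\stir{i + 1}{\ell} = i\stir{i}{\ell} + \stir{i}{\ell - 1}$, then match the coefficients of $\stir{i}{\ell}$ and of $\stir{i}{\ell - 1}$ separately. The $\stir{i}{\ell - 1}$ equation fixes $\E{N_{i + 1}} / F_{i + 1} = (\beta - \delta) / F_i$ (here $\beta / c = \beta - \delta$), and feeding this into the $\stir{i}{\ell}$ equation forces $\E{N_i} = (\beta - \delta)i + 1$. A further consistency check is that $F_i$ obeys its own recurrence: from~\eqref{stirlingdef} one finds $F_{i + 1} = (i + 2c)F_i$, and since $\beta + \delta = 1$ gives $2c = 1 + 1 / (\beta - \delta)$, this agrees with the value of $F_{i + 1}$ forced by the two coefficient conditions.

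The crux, and the main obstacle, is precisely this last point: the ansatz does not solve the recurrence exactly, but only once $\E{N_i}$ is replaced by the low-death linear approximation $(\beta - \delta)i + 1$ from~\eqref{linearapprox}. This is the same first-order, low-death, large-population regime in which $\E{A / B} \simeq \E{A} / \E{B}$ is justified (small variance of $N_i$; see Proposition~\ref{varianceprop}), so the two approximations are mutually consistent and the derivation stays self-contained at first order. As a sanity check I would confirm the pure-birth limit: at $\delta = 0$ we have $c = 1$, $F_i = \sum_\ell \stir{i}{\ell}2^\ell = (i + 1)!$, and $\E{N_i} = i + 1$, so the ansatz collapses to $\E{D_{\ell, i}} = \stir{i}{\ell}2^\ell / i!$, recovering~\eqref{purebirthdd} exactly.
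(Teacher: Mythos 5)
Your proof is correct, and it shares the paper's overall strategy --- the same law-of-total-expectation recurrence $\E{D_{\ell, i+1}} \simeq \left(1 - 1/\E{N_i}\right)\E{D_{\ell,i}} + \left(2\beta/\E{N_i}\right)\E{D_{\ell-1,i}}$, the same first-order expansion, and the same verification-by-ansatz using the Stirling recurrence~\eqref{stirlingdef} together with the low-death linearisation $\E{N_i} \simeq (\beta-\delta)i + 1$ --- but your bookkeeping differs in a way worth noting. The paper's ansatz is the unnormalised form $\E{D_{\ell,i}}_\text{az} = \stir{i}{\ell}(2\beta)^\ell(\beta-\delta)^{i-\ell}\big/\prod_{i'=1}^{i-1}\E{N_{i'}}$, verified against the recurrence first; rewriting it as the stated expression~\eqref{ddfirstorder} then requires a separate induction (Lemma~\ref{stirlinglem}, the identity $\prod_{k=1}^i(k/x+1) = x^{-i}\sum_k \stir{i}{k}(1+x)^k$ applied with $x = 1/(\beta-\delta)$). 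You instead take the normalised target expression itself as the ansatz, writing $c = (1-\delta/\beta)^{-1}$ and $F_i = \sum_{\ell'}\stir{i}{\ell'}(2c)^{\ell'}$, so the partition property is automatic, and the role of the paper's lemma is played by the one-line recurrence $F_{i+1} = (i + 2c)F_i$, which follows immediately from~\eqref{stirlingdef}. Your two coefficient conditions are mutually consistent exactly as you claim, since $(\beta-\delta)(i+2c) = (\beta-\delta)(i+1)+1$, so both reduce to the same low-death approximation the paper also invokes (justified by the small variance of $N_i$, Proposition~\ref{varianceprop}). This is a modest but genuine simplification: it eliminates the auxiliary product--sum lemma, whose content your $F_i$ recurrence reproduces implicitly via $F_i = (\beta-\delta)^{-i}\prod_{k=1}^{i}\left((\beta-\delta)k+1\right)$, and your pure-birth sanity check ($F_i = (i+1)!$ when $c = 1$) correctly recovers~\eqref{purebirthdd}.
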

\begin{proof}
    We employ our same approach as for the pure-birth case:
    \begin{align*}
        \E{D_{\ell, i + 1}} &= \E{\E{D_{\ell, i + 1}\,\middle|\,\{D_{\ell', i}\}_{\ell' \leq i}, N_{i + 1} > 0}}\\
        &= \E{D_{\ell, i} + \beta\left(-\frac{1}{N_i}D_{\ell, i} + \frac{2}{N_i}D_{\ell - 1, i}\right) + \delta\left(-\frac{1}{N_i}D_{\ell, i}\right)}\\
        &\simeq \left(1 - \frac{1}{\E{N_i}}\right)\E{D_{\ell, i}} + \frac{2\beta}{\E{N_i}}\E{D_{\ell - 1, i}},
    \end{align*}
    up to first order, where in the second line the impact of a cell dying is simply to reduce its contribution to the division distribution by its expected amount. As before, we have omitted the conditioning on survival and the initial condition $N_0 = 1$ for brevity.

    Next, consider the ansatz
    \begin{equation}\label{dijans}
        \E{D_{\ell, i}}_\text{az} = \frac{\stir{i}{\ell}(2\beta)^\ell(\beta - \delta)^{i - \ell}}{\prod_{i' = 1}^{i - 1}\E{N_{i'}}},
    \end{equation}
    which satisfies the previous recurrence relation by noting that, as argued in the remark following Proposition~\ref{expectedniprop} (as well as by~\eqref{linearapprox} and Figure~\ref{n-fig}), $\E{N_i\,\middle|\,N_0 = 1, N_i > 0} - 1 \simeq (\beta - \delta)i$ in the low-death limit:
    \begin{align*}
        \E{D_{\ell, i + 1}}_\text{az} &= \frac{\stir{i + 1}{\ell}(2\beta)^\ell(\beta - \delta)^{i + 1 - \ell}}{\prod_{i' = 1}^i\E{N_{i'}}}\\
        &= \left(i\stir{i}{\ell} + \stir{i}{\ell - 1}\right)\frac{(2\beta)^\ell(\beta - \delta)^{i + 1 - \ell}}{\E{N_i}\prod_{i' = 1}^{i - 1}\E{N_{i'}}}\\
        &= \frac{\E{N_i} - 1}{\E{N_i}}\stir{i}{\ell}\frac{(2\beta)^\ell(\beta - \delta)^{i - \ell}}{\prod_{i' = 1}^{i - 1}\E{N_{i'}}} + \frac{2\beta}{\E{N_i}}\stir{i}{\ell - 1}\frac{(2\beta)^{\ell - 1}(\beta - \delta)^{i - (\ell - 1)}}{\prod_{i' = 1}^{i - 1}\E{N_{i'}}}.
    \end{align*}
    It remains to show that the ansatz~\eqref{dijans} can be rewritten as~\eqref{ddfirstorder}. In this pursuit, we employ the following lemma.
    \begin{lem}\label{stirlinglem}
    For $x \neq 0$, the unsigned Stirling numbers defined by~\eqref{stirlingdef} satisfy
    \begin{equation}\label{stirlinglem2}
        \prod_{k = 1}^i\left(\frac{k}{x} + 1\right) = x^{-i}\sum_{k = 1}^i \stir{i}{k}(1 + x)^k.
    \end{equation}
    \end{lem}
    \begin{proof}[Proof of Lemma~\ref{stirlinglem}]
        Note that both expressions coincide to $1 + 1 / x$ when $i = 1$; suppose they do for $1 \leq i \leq I$. For $i = I + 1$, the right-hand side of~\eqref{stirlinglem2} is
        \begin{align*}
            x^{-(I + 1)}\sum_{k = 1}^{I + 1} \stir{I + 1}{k}(1 + x)^k &= x^{-(I + 1)}\sum_{k = 1}^{I + 1} \left(I\stir{I}{k} + \stir{I}{k - 1}\right)(1 + x)^k\\
            &= x^{-(I + 1)}\left(\sum_{k = 1}^I I\stir{I}{k}(1 + x)^k + \sum_{k = 1}^I\stir{I}{k}(1 + x)^{k + 1}\right)\\
            &= \frac{1}{x}\left(I + 1 + x\right)x^{-I}\sum_{k = 1}^I \stir{I}{k}(1 + x)^k\\
            &= \left(\frac{I + 1}{x} + 1\right)\prod_{k = 1}^I\left(\frac{k}{x} + 1\right),
        \end{align*}
        where we have used the definition~\eqref{stirlingdef} and the boundary conditions of the unsigned Stirling numbers of the first kind, resulting the left-hand expression of~\eqref{stirlinglem2}, concluding the proof by induction.
    \end{proof}
    Applying Lemma~\ref{stirlinglem} with $x = 1 / (\beta - \delta)$,~\eqref{dijans} transforms into
    \begin{equation*}
        \frac{\stir{i}{\ell}(2\beta)^\ell(\beta - \delta)^{i - \ell}}{\prod_{i' = 1}^{i - 1}\E{N_{i'}}} = \frac{\stir{i}{\ell}(2\beta)^\ell(\beta - \delta)^{i - \ell}}{(\beta - \delta)^i\sum_{\ell' = 1}^i \stir{i}{\ell'}(1 + 1 / (\beta - \delta))^{\ell'}}\E{N_i},
    \end{equation*}
    which is the desired expression~\eqref{ddfirstorder} after the identifications
    \begin{equation*}
        \frac{2\beta}{\beta - \delta} = 1 + \frac{1}{\beta - \delta} = 2\left(1 - \frac{\delta}{\beta}\right)^{-1}.\qedhere
    \end{equation*}
\end{proof}
\begin{rem}
    When the distributions of $D_{\ell, i}$ and $N_i$ are not sufficiently clustered around their means (that is, when $\delta$ isn't small), however, the first-order approximation for the division distribution is not sound. As both diverge as $i \to \infty$, we require their variances to grow more slowly than their means do.
\end{rem}
\begin{prop}\label{varianceprop}
    The variance of the population size $N_i$ for the birth-death process with $N_0 = 1$, using the low-death approximation $\E{N_i\,\middle|\,N_0 = 1, N_i > 0} \simeq (\beta - \delta)i + 1$, satisfies
    \begin{equation*}
        \mathrm{var}\left(N_i\right) \simeq \left(1 - (\beta - \delta)^2\right)i,
    \end{equation*}
    and thus tends to $0$ as $\delta \to 0$ (since then $\beta \to 1$), as expected.
\end{prop}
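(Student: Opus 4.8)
The plan is to mimic the recurrence-relation strategy used throughout, but applied to the second moment $\E{N_i^2}$ rather than the first. First I would write down the naive one-step conditioning for the second moment, in direct analogy with the mean recurrence~\eqref{linearapprox}: conditioning on $N_i$, the next step adds a cell with probability $\beta$ and removes one with probability $\delta$, so
\[
    \E{N_{i + 1}^2 \,\middle|\, N_i} = \beta(N_i + 1)^2 + \delta(N_i - 1)^2 = N_i^2 + 2(\beta - \delta)N_i + 1,
\]
where I have used $\beta + \delta = 1$. Taking expectations yields the linear recurrence $\E{N_{i + 1}^2} = \E{N_i^2} + 2(\beta - \delta)\E{N_i} + 1$.

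Next I would substitute the low-death approximation $\E{N_i} \simeq (\beta - \delta)i + 1$ into this recurrence, turning it into a first-order difference equation whose increment is affine in $i$. Telescoping from the initial condition $\E{N_0^2} = 1$ then gives a closed form for $\E{N_i^2}$ that is quadratic in $i$, with leading term $(\beta - \delta)^2 i^2$.

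The final step is to form $\vari(N_i) = \E{N_i^2} - \left(\E{N_i}\right)^2$. Here the quadratic-in-$i$ contributions, both carrying a factor $(\beta - \delta)^2$, cancel exactly, as does the constant term, leaving only the linear term $i\left(1 - (\beta - \delta)^2\right)$, which is the claimed result; since $\beta \to 1$ as $\delta \to 0$, this vanishes, as stated.

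Rather than any algebraic obstacle (the computation is elementary), the point requiring care is the same caveat that accompanies~\eqref{linearapprox}: the one-step recurrence above does not condition on survival and so ignores the absorbing boundary at $N_i = 0$. Hence the identity is only an approximation, valid precisely in the low-death regime where the survival conditioning is negligible, i.e.\ where $\E{N_i} \simeq (\beta - \delta)i + 1$ holds. As a consistency check, one may note that without the boundary the chain is simply the random walk $N_i = 1 + \sum_{j = 1}^i \xi_j$ with i.i.d.\ increments $\xi_j = \pm 1$ of probabilities $\beta$ and $\delta$; each increment has variance $\E{\xi_j^2} - \left(\E{\xi_j}\right)^2 = 1 - (\beta - \delta)^2$, so independence immediately gives $\vari(N_i) = i\left(1 - (\beta - \delta)^2\right)$, matching the recurrence computation.
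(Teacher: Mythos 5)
Your proposal is correct and follows essentially the same route as the paper's proof: the law of total expectation gives the recurrence $\E{N_{i+1}^2} = \E{N_i^2} + 2(\beta - \delta)\E{N_i} + 1$, into which the low-death approximation $\E{N_i} \simeq (\beta - \delta)i + 1$ is substituted, the recurrence is solved to obtain $\E{N_i^2}$ quadratic in $i$, and subtracting $\E{N_i}^2$ cancels the quadratic and constant terms. Your closing observation that, absent the absorbing boundary, the chain is a sum of i.i.d.\ $\pm 1$ increments each of variance $1 - (\beta - \delta)^2$ is a nice independent confirmation not present in the paper, but the core argument is the same.
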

\begin{proof}
    We omit the conditioning on survival and the initial condition $N_0 = 1$ for brevity. Using our usual approach, we have
    \begin{equation*}
        \E{N_{i + 1}^2} = \E{\E{N_{i + 1}^2\,\middle|\,N_i}} = \E{N_i^2} + 2(\beta - \delta)\E{N_i} + 1 \simeq \E{N_i^2} + 2(\beta - \delta)\left((\beta - \delta)i + 1\right) + 1.
    \end{equation*}
    Solving the ensuing recurrence, we find
    \begin{equation}\label{nsquaredexp}
        \E{N_i^2} \simeq (\beta - \delta)^2i^2 + \left(-(\beta - \delta)^2 + 2(\beta - \delta) + 1\right)i + 1,
    \end{equation}
    from which we subtract $\E{N_i}^2 \simeq \left((\beta - \delta)i + 1\right)^2$:
    \begin{equation*}
        \mathrm{var}\left(N_i\right) \simeq (\beta - \delta)^2i^2 + \left(-(\beta - \delta)^2 + 2(\beta - \delta) + 1\right)i + 1 - \left((\beta - \delta)i + 1\right)^2 = \left(1 - (\beta - \delta)^2\right)i.\qedhere
    \end{equation*}
\end{proof}
\begin{rem}
    There is an equivalent recurrence relation for the division distribution when expanded to second order using Corollary~\ref{secondordercoro}, though it cannot be solved by elementary methods:
    \begin{align*}
        \E{D_{\ell, i + 1}} &= \E{D_{\ell, i}} - \E{\frac{D_{\ell, i}}{N_i}} + 2\beta\E{\frac{D_{\ell - 1, i}}{N_i}}\\
        &\simeq \E{D_{\ell, i}} - \left(\frac{\E{D_{\ell, i}}}{\E{N_i}} - \frac{\E{D_{\ell, i}N_i}}{\E{N_i}^2} + \frac{\E{D_{\ell, i}}\E{N_i^2}}{\E{N_i}^3}\right)\\
        &\quad + 2\beta\left(\frac{\E{D_{\ell - 1, i}}}{\E{N_i}} - \frac{\E{D_{\ell - 1, i}N_i}}{\E{N_i}^2} + \frac{\E{D_{\ell - 1, i}}\E{N_i^2}}{\E{N_i}^3}\right)\\
        &= \E{D_{\ell, i}} - \frac{\E{D_{\ell, i}}}{\E{N_i}} + 2\beta\frac{\E{D_{\ell - 1, i}}}{\E{N_i}} + \frac{\E{N_i^2}}{\E{N_i}^3}\left(2\beta\E{D_{\ell - 1, i}} - \E{D_{\ell, i}}\right)\\
        &\quad + \frac{1}{\E{N_i}^2}\sum_{\ell' = 1}^i\left(\E{D_{\ell, i}D_{\ell', i}} - 2\beta\E{D_{\ell - 1, i}D_{\ell', i}}\right),
    \end{align*}
    where we used the linearity of expectation to simplify and have omitted all conditioning on survival and the initial condition $N_0 = 1$ for brevity, as usual. The first three terms in the final expression are those from the first-order expansion of Proposition~\ref{ddfirstorderprop}, and the next term can be simplified using~\eqref{nsquaredexp}. Despite this, the equation above for $\E{D_{\ell, i + 1}}$ not only remains complicated but includes terms of the form $\E{D_{\ell, i}D_{\ell', i}}$, which prevent it from being solved using our usual recursive methods.
\end{rem}

\subsection*{Mutational occurrences}

Recall the definition of the number of mutational occurrences from~\eqref{sfsmbdmirror}; we will write $C_i$ for this quantity. Note that this is also equal to (up to renormalisation by the number of unique mutations $M_i$ or the population $N_i$) the mean of the site frequency spectrum $\{S_{j, i}\}_j$ or the single-cell mutational burden distribution $\{B_{k, i}\}_k$, respectively. In the pure-birth case, we have the following result.
\begin{prop}\label{umbeta1prop}
    In a pure-birth process with mutation rate $\mu$ starting from a single mutation-free progenitor cell, the expected number of unique mutations and mutational occurrences at step $i$ are given by
    \begin{equation*}
        \E{M_i} = 2\mu i \quad \text{ and } \quad \E{C_i} = 2\mu(i + 1)\left(H_{i + 1} - 1\right),
    \end{equation*}
    for $H_n = \sum_{k = 1}^n k^{-1}$ the $n$th harmonic number.
\end{prop}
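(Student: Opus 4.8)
The plan is to handle both claims with the paper's one-step recurrence technique, conditioning on the configuration at step $i$ and invoking the law of total expectation. For $\E{M_i}$, the cleanest route is a direct increment argument. In the pure-birth case exactly one cell divides per step, and its two daughters each acquire a $\text{Pois}(\mu)$ number of brand-new (infinite-sites) mutations, $U_1, U_2 \sim \text{Pois}(\mu)$. Hence the number of unique mutations satisfies $M_{i + 1} = M_i + U_1 + U_2$, and since the progenitor is mutation-free, $M_0 = 0$. Taking expectations gives $\E{M_{i + 1}} = \E{M_i} + 2\mu$, which telescopes immediately to $\E{M_i} = 2\mu i$. One could instead sum the exact pure-birth SFS from~\eqref{purebirthsfs}, but the increment argument sidesteps the boundary subtleties of that sum.

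For $\E{C_i}$ I would work directly with $C_i$, the total number of $1$-entries of the matrix $Y_i$, equivalently $\sum_k k B_{k, i}$, the sum over living cells of their mutational burdens. The key step is to track how a single division changes $C_i$: if the dividing cell carries burden $b$, its row is replaced by two daughter rows, each inheriting those $b$ mutations and acquiring $U_1$ (respectively $U_2$) new ones, so the total number of entries increases by $b + U_1 + U_2$. Since in pure birth the dividing cell is selected uniformly among the $N_i = i + 1$ living cells, conditioning on the MBD gives $\E{b \,\middle|\, \{B_{k, i}\}_k} = \sum_k k B_{k, i} / (i + 1) = C_i / (i + 1)$, and therefore
\begin{equation*}
    \E{C_{i + 1} \,\middle|\, \{B_{k, i}\}_k} = C_i + \frac{C_i}{i + 1} + 2\mu.
\end{equation*}
Taking expectations yields the first-order linear recurrence $\E{C_{i + 1}} = \frac{i + 2}{i + 1}\E{C_i} + 2\mu$ with $\E{C_0} = 0$.

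It then remains to solve this recurrence, which I would verify by induction against the claimed closed form: assuming $\E{C_i} = 2\mu(i + 1)(H_{i + 1} - 1)$, substitute into the recurrence and use $H_{i + 2} = H_{i + 1} + 1 / (i + 2)$ to confirm $\E{C_{i + 1}} = 2\mu(i + 2)(H_{i + 2} - 1)$; the base case $\E{C_0} = 2\mu(H_1 - 1) = 0$ holds. Equivalently, dividing through by $i + 2$ turns the recurrence into a telescoping sum that produces the harmonic number directly. The main obstacle is pinning down the increment of $C_i$ at a division correctly — in particular recognising that the inherited burden $b$ of the dividing cell, and not merely the freshly acquired $U_1 + U_2$, contributes, because each inherited mutation now appears in two rows of $Y_i$ rather than one. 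Once the conditional expectation of the selected cell's burden is identified as $C_i / (i + 1)$, the rest is elementary, the only care needed being the harmonic-number bookkeeping between $H_{i + 1}$ and $H_{i + 2}$.
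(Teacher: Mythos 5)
Your proposal is correct and follows essentially the same route as the paper: the increment argument $\E{M_{i+1}} = \E{M_i} + 2\mu$ for the unique mutations, and for $C_i$ the identical recurrence $\E{C_{i+1}} = \tfrac{i+2}{i+1}\E{C_i} + 2\mu$ obtained by noting that the dividing cell's expected burden is $C_i/(i+1)$, solved by telescoping (the paper's change of variables $G_i = C_i/(i+1)$) or, equivalently, your induction check. The only cosmetic differences are that the paper additionally verifies $\E{M_i} = 2\mu i$ by summing the pure-birth SFS, and starts the $C_i$ recursion from $\E{C_1} = 2\mu$ rather than $\E{C_0} = 0$; neither changes the substance.
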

\begin{proof}
With each birth, an expected $2\mu$ new mutations are added to the population, so the result for $\E{M_i}$ is clear by observation. Taking the sum over $j$ of the expected pure-birth SFS~\eqref{purebirthsfs} as in the definition $M_i = \sum_j S_{j, i}$, and using the linearity of expectation and $j^{-1}(j + 1)^{-1} = j^{-1} - (j + 1)^{-1}$ to telescope terms, we also obtain the desired expression $\E{M_i} = 2\mu i$.

Using the usual law of total expectation approach, the recurrence relation for $C_i$ is
\begin{equation}\label{urecursionbeta1}
    \E{C_{i + 1}} = \E{\left(1 + \frac{1}{N_i}\right)C_i + 2\mu},
\end{equation}
since not only are $2\mu$ new mutations arising with each division event, but all of the mutations in the dividing cell are duplicated. The proportion of cells with $k$ mutations is $B_{k, i} / N_i$, so the expected number of mutations in the dividing cell is found by multiplying this proportion by $k$ and summing over $1 \leq k \leq M_i$. This sum is exactly $C_i / N_i$ by~\eqref{sfsmbdmirror}, giving rise to the parenthetical term in~\eqref{urecursionbeta1}.

Recalling that in the pure-birth case we have $N_i = i + 1$,~\eqref{urecursionbeta1} can be solved recursively in a similar manner to the pure-birth case of Proposition~\ref{sfsprop}: we divide both sides of~\eqref{urecursionbeta1} by $i + 2$, make a change of variables to $G_i = C_i / (i + 1)$ such that $G_1 = C_1 / 2 = \mu$, telescope the $G_i$ terms to find
\begin{equation*}
     G_{i + 1} = G_1 + 2\mu\sum_{k = 1}^i\frac{1}{k + 2} = \mu + 2\mu\left(H_{i + 2} - 1 - \frac{1}{2}\right) = 2\mu\left(H_{i + 2} - 1\right),
\end{equation*}
which, after multiplying by $i + 2$ and relabelling indices, is the desired expression.
\end{proof}
\begin{prop}\label{umprop}
For a birth-death process with mutation rate $\mu$, the expected number of unique mutations $\E{M_i}$ to first order has the following form:
\begin{equation}\label{msol}
    \E{M_i} \simeq 2\mu\left(\beta - \frac{\beta(\beta - \delta)}{\delta}\log\left(1 - \frac{\delta}{\beta}\right)(i - 1)\right)
\end{equation}
and the expected total mutational occurrences $\E{C_i}$ to first order obey
\begin{equation}\label{usol}
    \E{C_i} \simeq 2\mu\beta \E{N_i}\sum_{i' = 1}^i\frac{1}{\E{N_{i'}}},
\end{equation}
where all expectations are conditioned on survival and the initial condition $N_0 = 1$.
\end{prop}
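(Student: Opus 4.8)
The plan is to obtain both results from the law-of-total-expectation recurrences used throughout the paper, reusing the low-death relation $\E{N_{i+1}} \simeq \E{N_i} + (\beta - \delta)$ from~\eqref{linearapprox} together with the site frequency spectrum~\eqref{sfs}.

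I would begin with $\E{C_i}$, which is the cleaner of the two. Conditioning step $i + 1$ on whether it is a birth or a death: on a birth the uniformly chosen dividing cell carries an expected burden $C_i / N_i$ (since $C_i$ is the total number of occurrences spread over $N_i$ cells), this burden is copied into both daughters, and $2\mu$ genuinely new occurrences appear; on a death the dying cell removes its expected burden $C_i / N_i$. Assembling these with the first-order approximation $\E{C_i / N_i} \simeq \E{C_i} / \E{N_i}$ gives
\begin{equation*}
    \E{C_{i + 1}} \simeq \left(1 + \frac{\beta - \delta}{\E{N_i}}\right)\E{C_i} + 2\mu\beta.
\end{equation*}
The decisive move is to notice that the low-death relation turns the bracket into $\E{N_{i + 1}} / \E{N_i}$; dividing through by $\E{N_{i + 1}}$ and setting $G_i = \E{C_i} / \E{N_i}$ collapses the recurrence to $G_{i + 1} = G_i + 2\mu\beta / \E{N_{i + 1}}$, which telescopes. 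The initial term $G_1 = \E{C_1} / \E{N_1}$ coincides with the $i' = 1$ summand $2\mu\beta / \E{N_1}$, so it is absorbed into the sum and we recover~\eqref{usol}; specialising to $\beta = 1$, $\E{N_i} = i + 1$ reproduces the harmonic-number formula of Proposition~\ref{umbeta1prop}, a useful check.

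The recurrence for $\E{M_i}$ is structurally similar but the birth and death terms are asymmetric, which is the conceptual crux. A birth adds its $2\mu$ new mutations and changes nothing else, because duplicating an existing mutation does not create a new site. A death, by contrast, erases precisely those mutations carried by no other living cell, i.e.\ the singletons sitting in the dying cell; as each of the $S_{1, i}$ singletons lives in exactly one cell, a uniformly chosen cell holds $\E{S_{1, i}} / N_i$ of them in expectation. Hence
\begin{equation*}
    \E{M_{i + 1}} \simeq \E{M_i} + 2\mu\beta - \delta\frac{\E{S_{1, i}}}{\E{N_i}}.
\end{equation*}
Substituting~\eqref{sfs} at $j = 1$ makes the ratio $\E{S_{1, i}} / \E{N_i}$ independent of $i$, so the increment per step is constant and $\E{M_i}$ grows linearly.

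The remaining work, and the step I expect to be the real obstacle, is evaluating that constant increment in closed form: one must sum $\sum_{j' = 0}^\infty (\delta/\beta)^{j'} / [(j' + 1)(j' + 2)]$. Splitting $1 / [(j' + 1)(j' + 2)] = 1/(j' + 1) - 1/(j' + 2)$ and recognising the two pieces as index-shifted copies of $\sum_{n \geq 1} r^n / n = -\log(1 - r)$ with $r = \delta/\beta$ reduces the increment to $-\frac{2\mu\beta(\beta - \delta)}{\delta}\log(1 - \delta/\beta)$. Iterating from the initial value $\E{M_1} = 2\mu\beta$ (the lone cell divides with probability $\beta$, contributing $2\mu$ mutations) then produces the $(i - 1)$ prefactor and hence~\eqref{msol}; letting $\delta \to 0$ in this increment returns the pure-birth slope $2\mu$ of Proposition~\ref{umbeta1prop}, confirming the limit. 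Beyond the series manipulation, the subtle point throughout is that every ratio of random variables is handled only to first order, so each identity is an approximation valid in the low-death, large-population regime rather than an exact equality.
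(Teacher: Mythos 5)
Your proposal is correct and follows essentially the same route as the paper's proof: the identical law-of-total-expectation recurrences (birth duplicating the dividing cell's burden plus $2\mu\beta$ new occurrences; death removing the expected burden $\E{C_i}/\E{N_i}$ and the expected singletons $\delta\,\E{S_{1,i}}/\E{N_i}$), the same key observation that $1 + (\beta-\delta)/\E{N_i} \simeq \E{N_{i+1}}/\E{N_i}$ so that $\E{C_i}/\E{N_i}$ telescopes, and the same closed-form evaluation of $\sum_{j'\geq 0}(\delta/\beta)^{j'}/[(j'+1)(j'+2)]$ via partial fractions and the logarithm series, yielding the constant increment $-\tfrac{2\mu\beta(\beta-\delta)}{\delta}\log(1-\delta/\beta)$. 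The only differences are cosmetic (order of the two claims, and a slightly different but equivalent rearrangement of the series), so nothing further is needed.
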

\begin{proof}
First we will find the expression~\eqref{msol} for $\E{M_i}$. Note that the previous contributions during birth events of the pure-birth case are simply weighted by $\beta$; the death events mean that $1$-abundant mutations found in the dying cell (whose expected number is $\E{S_{1, i} / N_i}$, which we will expand to first order as before) vanish from the count at step $i + 1$. Thus, the recurrence relation for $\E{M_i}$ is
\begin{equation*}
    \E{M_{i + 1}} = \E{M_i + 2\mu\beta - \delta\frac{S_{1, i}}{N_i}},
\end{equation*}
which is solved by summing over $i$ and telescoping to find
\begin{equation}\label{mintermediary}
    \E{M_i} \simeq 2\mu\beta i - \delta\sum_{i' = 1}^{i - 1}\frac{\E{S_{1, i'}}}{\E{N_{i'}}}= 2\mu\beta i - \delta(i - 1)\sum_{n = 0}^\infty \frac{2\mu(\delta / \beta)^n}{(n + 1)(n + 2)},
\end{equation}
where we have taken advantage of the fact that (under our approximations) $\E{S_{1, i}} / \E{N_i}$ is independent of $i$ and substituted~\eqref{sfs}. A crude approximation would be this expression for $i' = 1$: $\E{S_{1, 1}} / \E{N_1} = \mu$. However, our approximations hold better for large $i$, so instead we compute
\begin{equation*}
    \sum_{n = 0}^\infty \frac{w^n}{(n + 1)(n + 2)} = 1 - \left(1 - w\right)\sum_{m = 0}^\infty\frac{w^m}{m + 2} = 1 + \frac{1 - w}{w^2} - \frac{1 - w}{w^2}\sum_{m = 1}^\infty\frac{w^m}{m} = \frac{1}{w} + \frac{1 - w}{w^2}\log(1 - w),
\end{equation*}
which we substitute into~\eqref{mintermediary} with $w = \delta / \beta$ to obtain
\begin{equation*}
    \E{M_i} \simeq 2\mu\beta i - 2\mu\delta(i - 1)\left(\frac{\beta}{\delta} + \left(1 - \frac{\delta}{\beta}\right)\frac{\beta^2}{\delta^2}\log\left(1 - \frac{\delta}{\beta}\right)\right),
\end{equation*}
which can be rearranged to find~\eqref{msol}.

To find the expression~\eqref{usol} for the total mutational occurrences, note that as before the birth contributions are weighted by $\beta$; the death contributions to $C_{i + 1}$ include the previous quantity $C_i$, minus the number of mutations found in the dying cell, whose expected number is $\E{C_i / N_i}$, as argued in the proof of Proposition~\ref{umbeta1prop}. Thus, the recurrence relation satisfied by $\E{C_i}$ is
\begin{equation*}
    \E{C_{i + 1}} = \E{\left(1 + \frac{\beta - \delta}{N_i}\right)C_i + 2\mu\beta}.
\end{equation*}
We notice that by~\eqref{linearapprox} the parenthetical term expands to first order to $\E{N_{i + 1}} / \E{N_i}$, so by dividing by $\E{N_{i + 1}}$ and solving for $\E{C_i} / \E{N_i}$ (by the same telescoping trick as in the proof of Proposition~\ref{umbeta1prop}), we obtain
\begin{equation*}
    \frac{\E{C_{i + 1}}}{\E{N_{i + 1}}} \simeq \frac{\E{C_1}}{\E{N_1}} + 2\mu\beta\sum_{i' = 1}^i\frac{1}{\E{N_{i' + 1}}} = \mu\left(1 + 2\beta\left(\sum_{i' = 1}^{i + 1}\frac{1}{\E{N_{i'}}} - \frac{1}{2\beta}\right)\right) = 2\mu\beta\sum_{i' = 1}^{i + 1}\frac{1}{\E{N_{i'}}},
\end{equation*}
which gives the desired expression after multiplying both sides by $\E{N_{i + 1}}$ and relabelling indices.
\end{proof}

\subsection*{Other mutational distributions}

As noted in the main text, when other mutational distributions from the Poisson distribution are used in simulations, the conversion from the division distribution (DD) to the single-cell mutational burden distribution (MBD) described by Figure~\ref{mbd-fig} still holds under some conditions. In Figure~\ref{unif-geom-fig}, the mutational distributions used are $\text{Unif}\{1, \dots, 2\mu - 1\}$ and $\text{Geom}(1 / \mu)$, both of which have mean $\mu$ and are recaptured by the predicted distribution.

\begin{figure}[ht]
    \centering
    \includegraphics[width = \textwidth]{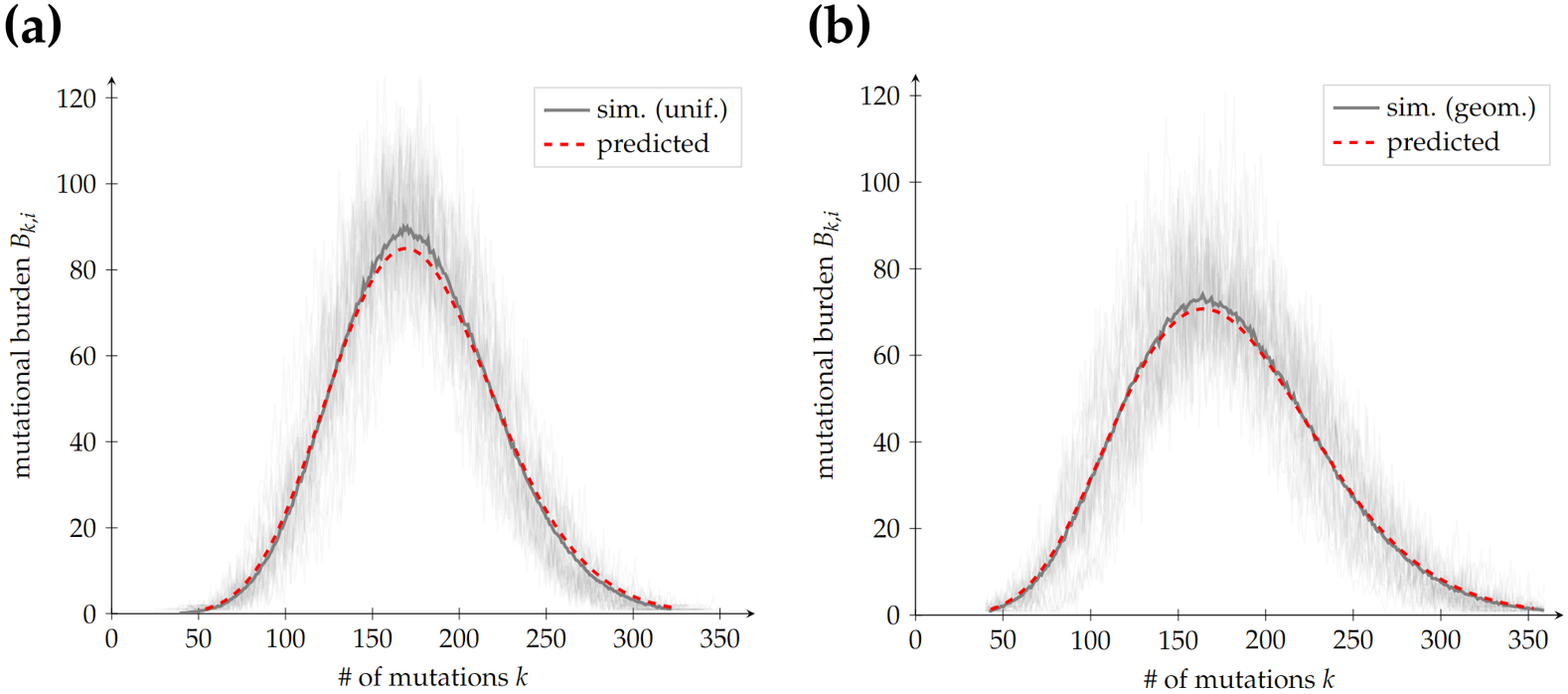}
    \caption{Conversions from the division distribution to the single-cell mutational burden distribution as in Figure~\ref{mbd-fig} of the main text, using other (than Poisson) mutational distributions, with means $\mu$: \textbf{(a)}~uniform distribution over the set $\{1, \dots, 2\mu - 1\}$ and \textbf{(b)}~geometric distribution with mean $\mu$. Average (solid dark grey line) of $200$ simulation realisations (representatives in solid pale grey lines) and the predicted MBD distribution converted from the DD (dashed red line).}
    \label{unif-geom-fig}
\end{figure}

If, however, the number of new mutations $U_1$ and $U_2$ (where the indices refer to daughter cells) are drawn from a distribution with small support, such as the uniform distribution $\text{Unif}\{\mu - 1, \mu + 1\}$, then it is unsurprising that the smoothing procedure from a DD to a MBD of Figure~\ref{mbd-fig} doesn't hold, since fewer possible mutational burdens are obtainable for cells. In the extreme case of the delta distribution $\text{Delta}(\mu)$, we simply obtain a scaling of the DD, where all cells have mutational burden equal to the product of the mutational mean and their division burden.

On the other hand, given an expected MBD, it is possible to recover the expected DD via a binning procedure. By histogramming the MBD with bins of width $\mu$ centred at integer multiples of $\mu$ (which results in the distribution supposing that exactly $U_1 = U_2 = \mu$ mutations were acquired with each division), we can then rescale by $\mu$ to obtain the DD. Symbolically,
\begin{equation}\label{histogrammingexpectations}
    \E{D_{\ell, i}} = \sum_{k \in \left[\left(\ell - \frac{1}{2}\right)\mu, \left(\ell + \frac{1}{2}\right)\mu\right)} \E{B_{k, i}}.
\end{equation}
In Figure~\ref{mbd-fig}a, this involves summing $\mu$ adjacent bins of the MBD, and then rescaling the $x$-axis by $\mu$ to obtain the DD. This is exactly the aforementioned distribution, obtained by using $\text{Delta}(\mu)$ as the mutational distribution.
\end{document}